\newcommand{\Post}{\operatorname{Post}}
\newcommand{\ctrl}{\mathrm{ctrl}}
\newcommand{\pc}{\mathrm{pc}}
\newcommand{\cp}{\mathrm{cp}}
\newcommand{\req}{\mathrm{req}}
\newcommand{\alt}{\mathrm{alt}}
\newcommand{\net}{\mathrm{net}}
\newcommand{\pd}{\mathrm{pd}}
\newtheorem{proposition}{Proposition}
\newtheorem{theorem}{Theorem}
\newtheorem{lemma}{Lemma}
\newtheorem{definition}{Definition}
\newtheorem{proof}{Proof}
\newtheorem{problem}{Problem}
\begin{document}



\title{{Design of Symbolic Controllers\\ for}  
	Networked Control Systems}

\author{Alessandro Borri, Giordano Pola and Maria Domenica Di Benedetto
\thanks{The research leading to these results has been partially supported by the Center of Excellence DEWS and received funding from the European Union Seventh Framework Programme [FP7/2007-2013] under grant agreement n. 257462 HYCON2 NoE.}
\thanks{Alessandro Borri is with the Istituto di Analisi dei Sistemi ed Informatica "A. Ruberti", Consiglio Nazionale delle Ricerche (IASI-CNR), 00185 Rome, Italy, alessandro.borri@iasi.cnr.it.}
\thanks{Giordano Pola and Maria Domenica Di Benedetto are with the Department of Information Engineering, Computer Science and Mathematics, Center of Excellence for Research DEWS, University of L{'}Aquila, 67100, L{'}Aquila, Italy, \{giordano.pola,mariadomenica.dibenedetto@univaq.it.\}}
}

\date{}

\maketitle

\thispagestyle{empty}

\pagestyle{empty}

\begin{abstract}                          
	Networked Control Systems (NCS) are distributed systems where plants, sensors, actuators and controllers communicate over shared networks.
	Non-ideal behaviors of the communication network include variable sampling/transmission intervals and communication delays, packet losses, communication constraints and quantization errors. NCS have been the object of intensive study in the last few years. However, due to the inherent complexity of NCS, current literature focuses on a subset of these non-idealities and mostly considers stability and stabilizability problems. Recent technology advances need different and more complex control objectives to be considered. In this paper we present first a general model of NCS, including most relevant non-idealities of the communication network; then, we propose a
	symbolic model approach to the control design with objectives expressed in terms of non-deterministic transition systems. The presented results are based on recent advances in symbolic control design of continuous and hybrid systems. An example in the context of robot motion planning with remote control is included, showing the effectiveness of the proposed approach.
\end{abstract}


\thispagestyle{empty}

\pagestyle{empty}


\section{Introduction}

Networked Control Systems (NCS) are complex, heterogeneous, spatially distributed systems where physical processes interact with distributed computing units through non-ideal communication networks. In the past, NCS were limited in the number of computing units and in the complexity of the interconnection network so that it was possible to obtain reasonable performance by aggregating subsystems that were locally designed and optimized. However the growth of complexity of the physical systems to control, together with the continuous increase in functions that these systems must perform, requires today to adopt a unified design approach where different disciplines (e.g. control systems engineering, computer science, software engineering and communication engineering) should contribute to reach new levels of performance.
The heterogeneity of the subsystems that are to be connected in an NCS make the control of these systems a hard but challenging task.
NCS have been the focus of much recent research in the control community:  Murray et al. in \cite{MurrayNCS} presented control over networks as one of the important future directions for control. Following \cite{HeemelsSurvey}, the most important non-idealities in the analysis of NCS are: (i) variable sampling/transmission intervals; (ii) variable communication delays; (iii) packet dropouts caused by the unreliability of the network; (iv) communication constraints (scheduling protocols) managing the possibly simultaneous transmissions over the shared channel; (v) quantization errors in the digital transmission with finite bandwidth.
There are two approaches to deal with such non-idealities: the \textit{deterministic} approach, which assumes worst-case (deterministic) bounds on the aforementioned imperfections, and the \textit{stochastic} approach, which provides a stochastic description of the non-ideal communication network. We focus on the deterministic methods, which can be further distinguished according to the modeling assumptions and the controller synthesis: a) the discrete-time approach (see e.g. \cite{CloostermanAutomatica10}, \cite{GarciaAutomatica07}) considers discrete-time controllers and plants; b) the sampled-data approach (see e.g. \cite{GaoAutomatica08}, \cite{NaghshtabriziSAGE10}) assumes discrete-time controllers and continuous-time (sampled-data) plants; c) the continuous-time (emulation) approach  (see e.g. \cite{HeemelsTAC10}, \cite{NesicTAC04})  focuses on continuous-time controllers and continuous-time (sampled-data) plants. Results obtained in the deterministic approach during the past few years are mostly about stability and stabilizability problems, see e.g.
\cite{NCSHesphana,HeemelsSurvey,NCS-HSCC2010}, and depend on the method considered and the assumptions on the non-ideal communication infrastructure.
In addition, current approaches in the literature take into account only a subset of these non-idealities. As reviewed in \cite{HeemelsSurvey}, for example, \cite{NesicTAC09} studies imperfections of type (i), (iv), (v), \cite{CloostermanAutomatica10}, \cite{NaghshtabriziCDC05}, \cite{NaghshtabriziSAGE10} consider simultaneously  (i), (ii), (iii), \cite{NesicTAC04} focuses on  (i), (iii), (iv), while \cite{GaoAutomatica08} manages (ii), (iii) and (v). Three types of non-idealities, namely (i), (ii), (iv), are considered for example in \cite{ChailletCDC08}, \cite{DonkersTAC11}, \cite{HeemelsTAC10}. In \cite{HeemelsCDC09}, the five non-idealities are dealt with but small delay and other restrictive assumptions are considered.
Finally, novel results in the stability analysis of NCS can be found in \cite{AlurTAC11}, \cite{AntunesTAC12}, \cite{BauerAutomatica12}, \cite{VanAutomatica12}. However, existing results do not address control design of NCS with complex specifications, as for example safety properties, obstacle avoidance, language and logic specifications.
This paper follows the deterministic approach and provides a framework for NCS control design where the {aforementioned non-idealities from (i) to (v)} 
can be taken into account. The proposed approach is based on the use of discrete abstractions of continuous and hybrid systems \cite{DiscAbs,paulo}, and follows the work in \cite{PolaAutom2008,MajidTAC11,GirardNA2013} based on the construction of \textit{symbolic models} for nonlinear and switched control systems. {As such, it offers a sound paradigm to solve control problems where software and hardware interact with the physical world, and to address a wealth of novel specifications that are difficult to enforce by means of conventional control design methods.}
Symbolic models  are abstract descriptions of complex systems where a symbol corresponds to an ``aggregate" of continuous states and a symbolic control label to an ``aggregate" of continuous control inputs. Several classes of dynamical and control systems that admit equivalent symbolic models have been identified in the literature. Within the class of hybrid automata we recall timed automata \cite{alur}, rectangular hybrid automata \cite{puri}, and o-minimal hybrid systems \cite{lafferriere,brihaye}. Early results for classes of control systems were based on dynamical consistency properties \cite{caines}, natural invariants of the control system \cite{koutsoukos}, $l$-complete approximations \cite{moor}, and quantized inputs and states \cite{forstner,BMP02}. Further works include results on controllable discrete-time linear systems \cite{TabuadaLTL}, piecewise-affine and multi-affine systems {\cite{polaTAC14}}, \cite{habets,BH06}, set-oriented discretization approach for discrete-time nonlinear optimal control problems \cite{junge1}, abstractions based on convexity of reachable sets \cite{gunther}, incrementally stable and incrementally forward complete nonlinear control systems with and without disturbances \cite{PolaAutom2008,MajidTAC11,PolaSIAM2009,BorriIJC2012}, switched systems \cite{GirardTAC2010} and time-delay systems \cite{PolaSCL10,PolaIJRNC2014}.
The interested reader is referred to \cite{GirardEJC11,paulo} for an overview on recent advances in this domain.

This paper addresses the control design of a fairly general model of NCS with complex specifications, and provides an extended version of the preliminary results published in \cite{BorriHSCC12,BorriCDC2012}. 
In particular, while in \cite{BorriHSCC12,BorriCDC2012} controllers are assumed to be static, we consider here general dynamic controllers. 

The main contributions are:

-- \emph{A general model of NCS}. We propose a general model of NCS, where the plant is a continuous-time nonlinear control system, the computing units are modelled by Moore machines, and the non-idealities introduced by the communication network include quantization errors, time-varying delay in accessing the network, time-varying delay in delivering messages through the network, limited bandwidth and packet dropouts.


-- \emph{Symbolic models for NCS}. We propose symbolic models that approximate NCS with arbitrarily good accuracy, { by using a novel notion,  introduced in this paper, called \emph{strong alternating approximate simulation}}. More specifically, under the assumption of existence of an incremental forward complete Lyapunov function for the plant of the NCS, we derive symbolic models approximating the NCS in the sense of strong alternating approximate simulation. {Stability of the open-loop NCS is not required}. 
{
	In some recent work \cite{ZamaniArxiv2014}, symbolic models for NCS are proposed, which, differently from our approach, are constructed on the basis of a symbolic model of the plant.}


-- \emph{Symbolic control design of NCS}. Building upon the {obtained} symbolic models, we address the NCS control design {problem, }where specifications are expressed in terms of transition systems. Given a NCS and a specification, a symbolic controller is derived such that the controlled system meets the specification \emph{in the presence of the considered non-idealities in the communication network}.


The paper is organized as follows. In Section \ref{sec:Notation} notation is introduced. In Section \ref{sec:modelingNCS} a model is proposed for a general class of nonlinear NCS. In  Section \ref{sec:SymbolicModels} symbolic models approximating NCS are derived. In Section \ref{sec:control} symbolic control design is addressed. An example of application of the proposed results is included in Section \ref{sec:example}. Finally, Section \ref{sec:conclusion} offers some concluding remarks. 
The Appendix recalls some technical notions that are instrumental in the paper.

\section{Notation and preliminary definitions}\label{sec:Notation}
\textit{Notation.} The symbols $\mathbb{N}$, $\mathbb{N}_0$, $\mathbb{Z}$, $\mathbb{R}$, $\mathbb{R}^{-}$, $\mathbb{R}^{+}$ and $\mathbb{R}_{0}^{+}$ denote the set of natural, nonnegative integer, integer, real, negative real, positive real, and nonnegative real numbers, respectively. The cardinality of a set $A$ is denoted by $|A|$. Given a set $A$ we denote $A^{2}=A\times A$ and $A^{n+1}=A\times A^{n}$ for any $n\in \mathbb{N}$. Given a pair of sets $A$ and $B$ and a relation $\mathcal{R}\subseteq A\times B$, the symbol $\mathcal{R}^{-1}$ denotes the inverse relation of $\mathcal{R}$, i.e. $\mathcal{R}^{-1}=\{(b,a)\in B\times A:( a,b)\in \mathcal{R}\}$; for $A' \subseteq A$ we define $\mathcal{R}(A')=\{b\in B| \exists a\in A' \text{ s.t. }(a,b)\in \mathcal{R} \}$ and for $B'\subseteq B$, $\mathcal{R}^{-1}(B')=\{a\in A| \exists b\in B' \text{ s.t. }(a,b)\in \mathcal{R} \}$. Given sets $A$, $B$ and $C$ and relations $\mathcal{R}_{ab}\subseteq A\times B$ and $\mathcal{R}_{bc}\subseteq B\times C$ we recall that the composition relation $\mathcal{R}=\mathcal{R}_{ab} \circ \mathcal{R}_{bc}\subseteq A\times C$ is defined as $\mathcal{R}_{ab} \circ \mathcal{R}_{bc}:=\{(a,c)\in A\times C | \exists b\in B \text{ s.t. }(a,b)\in \mathcal{R}_{ab} \wedge (b,c)\in \mathcal{R}_{bc}\}$. Note that, for any $A'\subseteq A$, $\mathcal{R}(A')=\mathcal{R}_{bc}(\mathcal{R}_{ab}(A'))$ and for any $C'\subseteq C$, $\mathcal{R}^{-1}(C')=\mathcal{R}^{-1}_{ab}(\mathcal{R}^{-1}_{bc}(C'))$. 
Given an interval $[a,b]\subseteq \mathbb{R}_0^+$,  we denote by $[a;b]$ (resp. $[a;b[$) the set $[a,b]\cap \mathbb{N}_0$ (resp. $[a,b[\cap \mathbb{N}_0$),  if $a\leq b$, and the empty set $\varnothing$ otherwise.
We denote the ceiling of a real number $x$ by $\lceil x \rceil=\min\{{n\in\mathbb{Z} \vert n\geq x}\}$. Given a vector $x\in\mathbb{R}^{n}$ we denote by $\Vert x\Vert$ the infinity norm and by $\Vert x\Vert_{2}$ the Euclidean norm of $x$. {Given any function  $f: D\rightarrow Y$ and any set $A\subseteq D$, we denote by $f(A)$ the image of the set $A$ through $f$, namely $f(A)=\{y\in Y : y=f(x), x \in D \}$.}

\textit{Preliminary definitions.}
A continuous function \mbox{$\gamma:\mathbb{R}_{0}^{+}\rightarrow\mathbb{R}_{0}^{+}$} is said to belong to class $\mathcal{K}$ if it is strictly increasing and \mbox{$\gamma(0)=0$}; a function $\gamma$ is said to belong to class $\mathcal{K}_{\infty}$ if \mbox{$\gamma\in\mathcal{K}$} and $\gamma(r)\rightarrow\infty$ as $r\rightarrow\infty$. 
{ Given $\varepsilon \in \mathbb{R}^+$ and $x=(x_1,x_2,...,x_n)\in\mathbb{R}^n$,} {the symbol $\mathcal{B}_{\varepsilon}(x)$ denotes the closed ball of radius $\varepsilon$ (in infinity norm) centered at $x$, i.e. {$\mathcal{B}_{\varepsilon}(x)=[-\varepsilon+x_1,x_1+\varepsilon] \times [-\varepsilon+x_2,x_2+\varepsilon] \times ... \times [-\varepsilon+x_n,x_n+\varepsilon]$}, while the symbol $\mathcal{B}_{[\varepsilon[}(x)$ denotes the set {$[x_1,x_1+\varepsilon[ \times [x_2,x_2+\varepsilon[ \times ... \times [x_n,x_n+\varepsilon[$.}}
{ Following \cite{PolaTAC12}, given any $\mu \in\mathbb{R}^+$ and any $x\in \mathbb{R}^n$, the symbol $[x]_{\mu}$ denotes the unique vector in $\mu \mathbb{Z}^n$ such that {$x \in \mathcal{B}_{[\mu[}([x]_{\mu})$}. As a consequence, $\Vert x-[x]_{\mu} \Vert \leq \mu$.} 
{
	Given $\mu\in\mathbb{R}^{+}$ and $A\subseteq \mathbb{R}^{n}$, we set $[A]_{\mu}:={\left\{b\in\mu \mathbb{Z}^n : b=[a]_{\mu}, a \in A\right\}}$ {and $\mathcal{B}_{[\mu[}(A)=\bigcup_{a\in A} \mathcal{B}_{[\mu[}(a) $}; if $B=\bigcup_{i\in [1;N]}A^{i}$ we set $[B]_{\mu}=\bigcup_{i\in [1;N]} ([A]_{\mu})^{i}$. 
	Consider a set $A$ given as a finite union of hyperrectangles, i.e. $A=\bigcup_{j\in [1;J]} A_j$, for some $J\in\mathbb{N}$, where {$A_j=\bigtimes_{k\in [1;n]} [\underline{a}_{j,k},\overline{a}_{j,k}[\subseteq \mathbb{R}^n$ with $\underline{a}_{j,k}<\overline{a}_{j,k}$, $\underline{a}_{j,k},\overline{a}_{j,k}\in \hat{\mu}_{A} {\mathbb{Z}}$} for some $\hat{\mu}_{A}\in\mathbb{R}^+$. By construction, for any integer $n_A \in \mathbb{N}$, by setting $\mu=\hat{\mu}_{A} / n_A$, we get that for any $a\in A$, $\Vert a-[a]_{\mu} \Vert \leq \mu$ and $[a]_{\mu} \in A${, implying $[A]_{\mu} \subseteq A$. 
	}}
	
	
	\section{Networked Control Systems\\ and Control Problem}\label{sec:modelingNCS}
	
	The class of NCS that we consider is depicted in Fig. \ref{NCSpic} 
	and is  inspired by the models reviewed in \cite{HeemelsSurvey}. The sub-systems composing the NCS are described hereafter.\\
	
	\begin{figure}
		\begin{center}
			\includegraphics[scale=1]{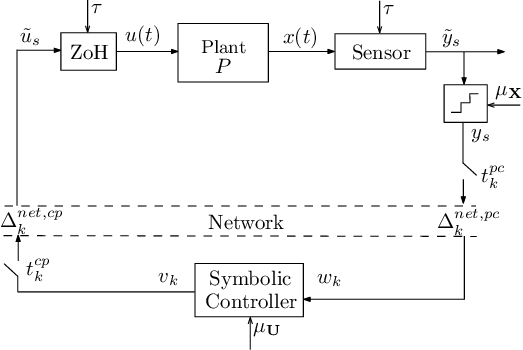}
			\caption{Networked Control System. A detailed description of the sub-systems depicted in this figure is reported in Section III.}
			\label{NCSpic}
		\end{center}
	\end{figure}
	
	\textit{\bf Plant.} The direct branch of the network includes the plant $P$ that is a nonlinear control system {of} the form:
	\begin{equation}
	\left\{
	\begin{array}
	{l}
	\dot{x}(t)=f(x(t),u(t)),\\
	x(t)\in 
	{ \mathbb{R}^{n} },\quad
	u(\cdot)\in \mathcal{U}, \quad {t\in \mathbb{R}_{0}^{+}},
	\end{array}
	\right.
	\label{NCSeq1}
	\end{equation}
	where $x(t)$ and $u(t)$ are the state and the control input at time $t$, 
	and $\mathcal{U}$ is the set of control inputs, defined as functions  
	from $\mathbb{R}^+_0$ to a finite non-empty set $\mathbf{U}\subset [\mathbb{R}^{m}]_{\mu_{\mathbf{U}}}$, for some $\mu_{{\mathbf{U}}}\in\mathbb{R}^{+}$, and constant in any interval $[s\tau,(s+1)\tau[$ with $s\in \mathbb{N}_0$ and for some given $\tau\in\mathbb{R}^+$, where $s$ is the index identifying the sampling interval (starting from $0$). 
	In the sequel we abuse notation by denoting the constant control input $u(t)={u}$ in the domain $[s\tau, (s+1)\tau[$ for all $s\in \mathbb{N}_0$ and for some $\tau\in\mathbb{R}^{+}$ by $u$. The function {$f:{ \mathbb{R}^{n} }\times \mathbf{U} \rightarrow \mathbb{R}^{n}$} is assumed to be Lipschitz on compact sets with respect to the first argument. In the sequel we denote by {$\mathbf{x}(t,x_0,u)$} the state reached by (\ref{NCSeq1}) at time $t$ under the control input $u$ from the state {$x_0$}.  
	We assume that the control system $P$ is forward complete in $\mathbb{R}^{n}$, namely that every trajectory {$\mathbf{x}(\cdot,{x(0)},u)$ of $P$} is defined on $[0,\infty[$. Sufficient and necessary conditions for a control system to be forward complete can be found in \cite{fc-theorem}. 
	\\  
	
	\textit{\bf Sensor.} On the right-hand side of the plant $P$ in Fig. \ref{NCSpic}, a sensor is placed. Since the sensor is physically connected to the plant, we assume that:\\
	
	(A.1) The sensor acts in time-driven fashion, it is synchronized with the plant and updates its output value at times that are integer multiples of $\tau\in\mathbb{R}^{+}$, i.e. 
	{$\tilde{y}_s=\mathbf{x}(s\tau,{x(0)},u)$.\\

		\textit{\bf Quantizer.} A quantizer follows the sensor. For simplicity, we assume that the quantizer is \textit{uniform}, with accuracy $\mu_{\mathbf{X}}\in \mathbb{R}^+
		$. 
		The role of the quantizer is: i) to discretize the continuous-valued sensor measurement sequence $\{\tilde{y}_s\}_{s \in{\mathbb{N}_0}}$ to get the quantized sequence $\{y_s\}_{s \in {\mathbb{N}_0}}$, with $y_s=[\tilde{y}_s]_{\mu_{\mathbf{X}}}$; ii) to encode the signals into digital messages 
		and to add overhead bits, resulting in the sequence of digital messages $\{\bar{y}_s\}_{s \in {\mathbb{N}_0}}$. 
		The transmission overhead takes into account the communication protocol, the packet headers, source and channel coding as well as data compression and encryption. We assume a fixed average relative overhead {$N_{\pc}^{+}\in]-1,+\infty[$} on each data bit ($N_{\pc}^{+}$ may be negative to include the case of  data compression). More precisely: 
		\\
		
		(A.2) $N_{\pc}^{+}$ bits are added per each bit of the digital signal encoding 
		${y}_s$, 
		{for all $s\in {\mathbb{N}_0}$}. \\
		
		
		\textit{\bf Network.} In the following, the index $k\in\mathbb{N}$ denotes the current iteration in the feedback loop. Due to the non-idealities of the network, not all the output samples can be transmitted through the network. We assume that only one output sample per iteration is sent. In particular, {$\{M_{k}\}_{k \in \mathbb{N}}\subseteq \mathbb{N}_0$} denotes the subsequence of the sampling intervals when the output samples are sent through the network, i.e.  at time $M_k \tau$ the digital message $\bar{y}_{M_{k}}$ encodes the output sample ${y}_{M_{k}}=[x(M_{k}\tau)]_{\mu_{\mathbf{X}}}$ and is sent (iteration $k$). We set $M_1=0$. 
		The communication network is characterized by the following features:\\
		
		\textit{Time-varying access to the network. } 
		The digital message $\bar{y}_{M_{k}}$ cannot be sent instantaneously to the network, because the communication channel is assumed to be a resource which is shared with other nodes or processes in the network. The policy by which a signal of a node is sent before or after a message of another  node is managed by the network scheduling protocol selected. We assume that:\\
		
		(A.3) The network waiting times $\Delta_{k}^{\req,\pc}$ in the plant-to-controller branch of the feedback loop are bounded, i.e. 
		$\Delta_{k}^{\req,\pc}\in [\Delta_{\min}^{\req}, \Delta_{\max}^{\req}]$, for some $\Delta_{\min}^{\req}$, $\Delta_{\max}^{\req}\in \mathbb{R}^+_0$. \\
		
		At time $t_{k}^{\pc}:=M_{k}\tau+\Delta_{k}^{\req,\pc}$, the message 
		$\bar{w}_k:= \bar{y}_{M_{k}}$ 
		is sent through the network. \\
		
		\textit{Limited bandwidth. } 
		In real applications, the capacity of the digital communication channel is 
		limited {and time-varying}. We denote by $B_{\min}$, $B_{\max}\in\mathbb{R}^{+}$,  
		{with $B_{\min} \leq B_{\max}$}, the minimum and maximum capacities of the channel (expressed in bits per second, bps). 
		In view of the binary coding and the transmission overhead (see Assumption (A.2)),  we assume that:\\
		
		(A.4) A delay $\Delta_{k}^{B,\pc}${$\in \mathbb{R}^+$,} due to the limited bandwidth, is introduced in the plant-to-controller branch of the feedback loop, {for all $k\in {\mathbb{N}}$}.\\ 

		\textit{Time-varying delivery of messages. } 
		The delivery of message {$\bar{w}_k$} may be subject to further delays, due to congestion phenomena in the network, etc. We assume that:\\
		
		(A.5) Network communication delays $\Delta_{k}^{\net,\pc}$ in the plant-to-controller branch of the feedback loop are bounded, i.e.
		$\Delta_{k}^{\net,\pc}\in [\Delta^{\net}_{\min},\Delta_{\max}^{\net}]$, for some $\Delta^{\net}_{\min}$, $\Delta_{\max}^{\net}\in\mathbb{R}_{0}^+$. \\
		
		\textit{Packet dropout. } 
		In real applications, one or more messages can be lost during the transmission, because of the unreliability of the communication channel. We assume that:\\
		
		(A.6) The maximum number of successive packet dropouts is $N_{\pd}$. \\
		
		\textit{\bf Symbolic Controller.} {After a finite number of possible retransmissions (see Assumption (A.6)),} 
		message {$\bar{w}_k$}  is decoded into the quantized sensor measurement {$w_k$} and reaches the controller. The symbolic controller $C$ is dynamic, non-deterministic, remote and asynchronous with respect to the plant and is expressed as a Moore machine:
		{
			\begin{equation}
			C: \left\{
			\begin{array}
			{l}
			\xi_{k}  \in f_C(\xi_{k-1},w_k), \quad {\xi_{k}  \in \Xi_C, k \in \mathbb{N}\setminus \{1\}},\\ 
			v_k  = h_C(\xi_k), \qquad {\hspace{5mm} v_k \in \mathbf{U}, \hspace{1.5mm} k \in \mathbb{N}},\\
			\xi_{1}\in \Xi^0_C, 
			\label{symbC}
			\end{array}
			\right.  \quad 
			\end{equation}
		}
		where $\Xi_C$ is the {finite} set of states of the controller, $\Xi^0_C{\color
			{violet}\subseteq \Xi_C}$ is the set of initial states of the controller, { $f_C$ {is a possibly partial function} $f_C: \Xi_C \times [{\mathbb{R}^n}]_{\mu_{{\mathbf{X}}}} \rightarrow 2^{\Xi_C}$ and $h_C : \Xi_C \rightarrow \mathbf{U}$. } 
		At each iteration $k$, the controller takes as input the measurement sample $w_k \in [{\mathbb{R}^n}]_{\mu_{{\mathbf{X}}}}${, updates its internal state to $\xi_k$ and returns the control sample $v_k = h_C(\xi_k)\in \mathbf{U}$} as output, which is synthesized by a computing unit that may be employed to execute several tasks. Note that, when $\Xi_C$ is a singleton set, $C$ becomes static. 
		The policy by which a computation is executed before or after another computation depends on the scheduling protocol adopted. We assume that: \\
		
		(A.7) The computation time $\Delta^{\ctrl}_k$ for the symbolic controller to return its output value {${v}_{k}$} is bounded, i.e.
		$\Delta_{k}^{\ctrl}\in [\Delta^{\ctrl}_{\min},\Delta_{\max}^{\ctrl}]$, for some $\Delta^{\ctrl}_{\min}$, $\Delta_{\max}^{\ctrl}\in\mathbb{R}_{0}^+$. \\
		
		The control sample {$v_k$} is encoded into a digital signal 
		and some overhead information is added to take into account the communication protocol, the packet headers, source and channel coding as well as data compression and encryption. The resulting message is denoted by {$\bar{v}_k$}. 
		We assume a fixed average relative overhead $N_{\cp}^{+}$ on each data bit, which may also be negative due to possible data compression. 
		{The following Assumptions (A.8) to (A.11), describing the non-idealities in the controller-to-plant branch of the network, correspond exactly to Assumptions (A.2) to (A.5), previously given for the plant-to-controller branch:\\
			
			(A.8) {$N_{\cp}^{+}\in]-1,+\infty[$} bits are added per each bit of {${v}_k$}.\\
			
			(A.9) Network waiting times $\Delta_{k}^{\req,\cp}$ in the controller-to-plant branch of the feedback loop are bounded, i.e. 
			$\Delta_{k}^{\req,\pc}\in [\Delta_{\min}^{\req}, \Delta_{\max}^{\req}]$. \\
			
			At time $t^{\cp}_k:=M_k \tau +\Delta_{k}^{\req,\pc}+\Delta_{k}^{B,\pc}+\Delta_{k}^{\net,\pc}+\Delta^{\ctrl}_{k}+\Delta_{k}^{\req,\cp}$, the message 
			$\bar{v}_k$ is sent. \\
			
			(A.10) A delay {{$\Delta_{k}^{B,\cp}\in \mathbb{R}^+$,}} due to the  limited bandwidth, is introduced in the controller-to-plant branch of the feedback loop.\\
			
			(A.11) Network communication delays ${\Delta_{k}^{\net,\cp}}$ in the controller-to-plant branch of the feedback loop are bounded, i.e.
			$\Delta_{k}^{\net,\cp}\in [\Delta^{\net}_{\min},\Delta_{\max}^{\net}]$. \\
			
			{We denote by 
				\[
				\Delta_k:=\Delta^{\req,\pc}_{k}\!+\Delta_{k}^{B,\pc}\!+\Delta_{k}^{\net,\pc}\!+\Delta_{k}^{\ctrl}\!+\Delta^{\req,\cp}_{k}\!+\Delta_{k}^{B,\cp}\!+\Delta_{k}^{\net,\cp}
				\]
				the 
				total delay induced by network and computing unit at iteration $k$, as a result of the assumptions above.}
			We can finally define 
			\begin{equation}
			N_k:=\left\lceil {\Delta_{k}} / {\tau} \right\rceil {\in \mathbb{N}}
			\label{eq:delta_to_N}
			\end{equation}
			as the \emph{discrete delay} induced by iteration $k$, expressed in terms of number of sampling intervals of duration $\tau$. 
			{{
					From the definitions of $M_k$ and $N_k$, we get $M_{k+1}=M_{k}+N_k$.} \\
				
				\textit{\bf ZoH.} {After a finite number of possible retransmissions (see Assumption (A.6)),} 
				message {$\bar{v}_k$}  is decoded into the control input $v_k$ and reaches the Zero-order-Holder (ZoH), placed on the left-hand side of the plant $P$ in Fig. \ref{NCSpic}.
			}
			We assume that:\\
			
			(A.12) The ZoH is updated at time $M_{k+1} \tau$ to the new value $v_k$, which is held exactly for one iteration, until a new control sample shows up, i.e. $u(t) = v_{k-1}$, $t\in[M_{k} \tau,M_{k+1} \tau[$. 
			At time $t=0$ a reference control input {$v_0:=\tilde{u}_{0}\in \mathbf{U}$} is held by the ZoH.\\

			{ 
				In the sequel we refer to the NCS model as $\Sigma${, which is also formally described in (\ref{sigma})}. 
				A trajectory of $\Sigma$ is a function $x:{\mathbb{R}_0^+}\rightarrow \mathbb{R}^n$ 
				satisfying (\ref{sigma}). 
			}} 
			\begin{figure*}[!t]
				\normalsize
				\newcounter{MYtempeqncnt}
				\setcounter{MYtempeqncnt}{\value{equation}}
				\setcounter{equation}{3}
				\begin{equation}
				\label{sigma}
				\Sigma:
				\left\{
				\begin{array}
				{ll}
				\text{Iteration delay:}
				&
				N_{k}=\left\lceil \frac{\Delta_{k}}{\tau} \right\rceil , {\Delta}_{k} \in {\mathbb{R}^+}, k\in\mathbb{N},\\
				\text{Sampling/holding time sequence:}
				&
				\left\{
				\begin{array}
				{l}
				M_{k+1}=M_{k}+N_k, k\in\mathbb{N},\\
				M_{1}=0,
				\end{array}
				\right.\\
				{\text{ZoH:}}
				&
				{\left\{
					\begin{array}
					{l}
					u(t)=\sum{_{k=1}^{\infty}} 
					{v}_{k-1}\mathbf{1}_{[M_k \tau,M_{k+1}\tau[}(t),{ t\in \mathbb{R}_0^+},\\
					v_0=\tilde{u}_0  \quad \text{ given},
					\end{array}
					\right.}
				\\
				\text{Plant:}
				&
				\left\{
				\begin{array}
				{l}
				\dot{x}(t)=f(x(t),u(t)),\\
				x(t)\in { \mathbb{R}^{n}}, \quad 
				u(\cdot)\in \mathcal{U}, \quad { t\in \mathbb{R}_0^+}, 
				\end{array}
				\right.
				\\
				{\text{Sensor:}}
				&
				\tilde{y}_s=
				\mathbf{x}(s \tau,{x(0)},u){\in \mathbb{R}^n}, 
				s\in {\mathbb{N}_0},\\
				\text{Quantizer:}
				&
				y_{s}=[\tilde{y}_s]_{\mu_{\mathbf{X}}}{ 
				}
				, s\in {\mathbb{N}_0},
				\\
				{\text{Switch:}}
				&
				{w}_{k}=y_{s}, s=M_{k}, 
				{k\in\mathbb{N},}
				\\
				\text{Controller:}
				&
				\left\{
				\begin{array}{l}
				{\xi_{k}  \in f_C(\xi_{k-1},w_k),  \quad {\xi_{k}  \in \Xi_C, k \in \mathbb{N}\setminus \{1\}}, 
				}  
				\\
				{v_k  = h_C(\xi_k), \qquad {\hspace{5mm} v_k \in \mathbf{U}, \hspace{1.5mm} k \in \mathbb{N}}, 
				}
				\\
				\xi_{1}\in \Xi^0_C.
				\end{array}
				\right.
				\end{array}
				\right.
				\end{equation}
				\setcounter{equation}{4}
				\hrulefill
				\vspace*{4pt}
			\end{figure*}
			Due to possible different realizations of the non-idealities and the non-deterministic controller, the NCS $\Sigma$ is non-deterministic. 
			Note that the definition of NCS given in this section allows taking into account different scheduling protocols and communication constraints: any  protocol or set of protocols  satisfying Assumptions (A.2--A.5), (A.6) and (A.8--A.11), {such as Controller Area Network (CAN) \cite{CAN} and Time Triggered Protocol (TTP)  \cite{TTP} used in vehicular and industrial applications,}  can be used.\
			
			We {conclude this section by introducing} the control problem that we address in this paper. 
			We consider a control design problem where the NCS $\Sigma$ has to satisfy a 
			specification $Q$, given in terms of  a non-deterministic transition system, up to a desired {accuracy} $\varepsilon$, while being robust with respect to the non-idealities of the communication network. More formally:
			
			\begin{problem}
				\label{problem0}
				Consider a specification $Q$ expressed in terms of a {finite} collection of transitions $T_{Q}\subseteq X_{Q}\times X_{Q}$, with {$X_{Q}\subseteq\mathbb{R}^n$}, and let $X_{Q}^{0}\subseteq X_{Q}$ be a set of initial states. For any desired {accuracy} $\varepsilon\in\mathbb{R}^+$, find 
				a quantization parameter $\mu_{\mathbf{X}}\in\mathbb{R}^+$,  
				a set of initial states  $\mathbf{X}_0$ of the plant and a {symbolic} controller $C$ in the form of (\ref{symbC}) such that, 
				{ for any sequence $\{\tilde{y}_s\}_{s\in\mathbb{N}_0}$ generated by the NCS $\Sigma$ in (\ref{sigma}) with $\tilde{y}_0\in {\mathbf{X}}_0$,
				}
				there exists a sequence $\{x_Q^s\}_{s \in \mathbb{N}_0}$ with $x_Q^0 \in X_{Q}^{0}$ such that, for any discrete-time $s\in \mathbb{N}_0$, the following conditions hold:
				\begin{itemize}
					\item [1)] $(x_Q^s,x_Q^{s+1})\in T_Q$; 
					\item [2)] $\Vert \tilde{y}_{s} - x_Q^s \Vert \leq \varepsilon$.
				\end{itemize}
			\end{problem}
			\section{Symbolic Models for NCS}\label{sec:SymbolicModels}
			
			In this section we propose symbolic models that approximate NCS with arbitrarily good accuracy{, which} is instrumental to give {in Section \ref{sec:control}} the solution to Problem \ref{problem0}. 
			
			{
				We start by providing tighter bounds on the delay defined in Section \ref{sec:modelingNCS}, depending on the particular specification considered. Consider a set $\mathbf{X}$, 
				with  $\mathcal{B}_{\varepsilon}(X_Q)\subseteq\mathbf{X}\subseteq \mathbb{R}^n$, given as a finite union of hyperrectangles $\mathbf{X}=\bigcup_{j\in [1;J]} \mathbf{X}_j$, for some $J\in\mathbb{N}$, each in the form ${\mathbf{X}_j=}\bigtimes_{k\in [1;n]} [\underline{x}_{j,k},\overline{x}_{j,k}[$, with $\underline{x}_{j,k}<\overline{x}_{j,k}$, $\underline{x}_{j,k},\overline{x}_{j,k}\in \hat{\mu}_{\mathbf{X}} {\mathbb{Z}}$ for some $\hat{\mu}_{\mathbf{X}}\in\mathbb{R}^+$. 
				The property $\mathcal{B}_{\varepsilon}(X_Q)\subseteq\mathbf{X}$ and condition 2) in Problem \ref{problem0} imply that, if a controller $C$ in the form \eqref{symbC} solves Problem \ref{problem0}, then the corresponding sensor measurements $\tilde{y}_s $ belong to the bounded set $\mathbf{X}$ for all $s\in \mathbb{N}_0$. As a consequence, it is possible to provide an upper-bound on the length of the digital messages  encoding sensor measurements and, in turn, some uniform bounds on the delay $\Delta_k$ induced by each network iteration.} {In particular:
				\begin{itemize}
					\item  Assumption A.2) implies that the number of bits of  
					message $\bar{y}_s$ is {bounded by} $\lceil(1+N_{\pc}^{+}) \lceil\log_{2} \vert [\mathbf{X}]_{\mu_{{\mathbf{X}}}} \vert \rceil\rceil$, {for all $s\in {\mathbb{N}_0}$};
					\item  from Assumption A.4), one has $\Delta^{B,\pc}_{k}\in [\Delta_{\min}^{B,\pc},\Delta_{\max}^{B,\pc}]$, with $\Delta_{\min}^{B,\pc}=\lceil(1+N_{\pc}^{+}) \lceil\log_{2} \vert [\mathbf{X}]_{\mu_{{\mathbf{X}}}} \vert \rceil \rceil/ B_{\max}$ and $\Delta_{\max}^{B,\pc}=\lceil(1+N_{\pc}^{+}) \lceil\log_{2} \vert [\mathbf{X}]_{\mu_{{\mathbf{X}}}} \vert \rceil \rceil/ B_{\min}$;
					\item  Assumption (A.8) implies that  the number of bits of $\bar{v}_k$ is {bounded by} $\lceil(1+N_{\cp}^{+}) \lceil\log_{2} \vert \mathbf{U} \vert \rceil \rceil$;
					\item from Assumption (A.10), one has \\$\Delta_{k}^{B,\cp}\in [\Delta_{\min}^{B,\cp} ,\Delta_{\max}^{B,\cp}]$, with $\Delta_{\min}^{B,\cp}=\lceil(1+N_{\cp}^{+}) \lceil\log_{2} \vert \mathbf{U} \vert \rceil \rceil / B_{\max}$ and $\Delta_{\max}^{B,\cp}= \lceil (1+N_{\cp}^{+}) \lceil\log_{2} \vert \mathbf{U} \vert \rceil \rceil / B_{\min}$.
				\end{itemize}}

				
				
				In the absence of packet dropouts,  one has ${\Delta}_{k}\in [\bar{\Delta}_{\min},\bar{\Delta}_{\max}]$, where $\bar{\Delta}_{\min}$, $\bar{\Delta}_{\max} \in \mathbb{R}^+$ are the minimum and maximum delays computed according to the {given} assumptions (excluding (A.6)),  as 
				\[\bar{\Delta}_{\min} := \Delta_{{\min}}^{B,\pc}+\Delta_{\min}^{\ctrl}+\Delta_{{\min}}^{B,\cp}+2\Delta^{\req}_{\min}+2\Delta_{\min}^{\net},\]
				\[\bar{\Delta}_{\max} := \Delta_{{\max}}^{B,\pc}+\Delta_{\max}^{\ctrl}+\Delta_{{\max}}^{B,\cp}+2\Delta^{\req}_{\max}+2\Delta_{\max}^{\net}.\]
				
				{In presence of packet dropouts,} under Assumption (A.6) and following the so-called \emph{emulation approach}, reformulating {them} in terms of additional delays, see e.g. \cite{HeemelsSurvey}, it is readily seen that iteration $k$ introduces a time-varying delay ${\Delta}_{k} \in [{\Delta}_{\min},{\Delta}_{\max}]$ {in \eqref{sigma}}, with 
				$\Delta_{\min} = \bar{\Delta}_{\min}$ and $\Delta_{\max} = (1+N_{\pd})\bar{\Delta}_{\max}$,
				where $N_{\pd}$ is the maximum number of subsequent packet dropouts.	
				{Consequently, discrete delays $N_k$ in \eqref{eq:delta_to_N} will be bounded as follows:
					\begin{equation}
					N_k\in [N_{\min};N_{\max}] \quad  \forall k\in \mathbb{N},
					\label{eq:N_k}
					\end{equation}
					with bounds given by: 
					\begin{equation}
					N_{\min}=\left\lceil {\Delta_{\min}} / {\tau} \right\rceil\in\mathbb{N}, \qquad N_{\max}=\left\lceil {\Delta_{\max}} / {\tau} \right\rceil\in\mathbb{N}.
					\label{eq:N_min_max}
					\end{equation}
					
				}

				
				
				{We are now ready to use the notion of system as a unified mathematical framework to describe NCS.
				}
				
				
				
				\begin{definition}
					\label{ssystem}
					\cite{paulo}
					A system is a sextuple \[S=(X,X_{0},U,\rTo,Y,H)\] consisting of 
					a set of states $X$, a set of initial states $X_{0}\subseteq X$, a set of inputs $U$, a transition relation $\rTo \subseteq X\times U\times X$, a set of outputs $Y$ and an output function $H:X\rightarrow Y$.
					A transition $(x,u,x^{\prime})\in\rTo$ of $S$ is denoted by $x\rTo^{u}x^{\prime}$. For such a transition, state $x^{\prime}$ is called a $u$-successor or simply a successor of state $x$. We denote by $\Post_u(x)$ the set of $u$-successors of a state $x$ and by $U(x)$ the set of inputs $u\in U$ for which $\Post_u(x)$ is nonempty.
				\end{definition}
				
				System $S$ is said to be \textit{symbolic} (or  \textit{finite}), if $X$ and $U$ are finite sets, \textit{(pseudo)metric}, if the output set $Y$ is equipped with a (pseudo)metric $d:Y\times Y\rightarrow\mathbb{R}_{0}^{+}$, \textit{deterministic}, if for any $x\in X$ and $u\in U$ there exists at most one state $x^{\prime}\in X$ such that $x \rTo^{u} x^{\prime}$, \textit{non-blocking}, if $U(x)\neq \varnothing$ for any $x\in X$.
				The evolution of systems is captured by the notions of state and output runs. { A state run of $S$ is a {possibly infinite} sequence $\{x_i\}$  such that { $x_0\in X_0$ and}, for any $i$, there exists $u_{i}\in U$ for which $x_{i} \rTo^{u_{i}} x_{i+1}$. An output run is a {possibly infinite} sequence $\{y_i\}$ such that there exists a state run $\{x_i\}$ with $y_i=H(x_i)$ for any $i$.} 
				In order to give a representation of NCS in terms of systems, we first need to provide an equivalent formulation of NCS.  
				Given the NCS $\Sigma$, 
				consider the NCS $\Sigma_d$ depicted in Fig. 2 and with evolution formally specified by equations (\ref{sigma2}). 
				\begin{figure*}[!t]
					\normalsize
					\setcounter{MYtempeqncnt}{\value{equation}}
					\setcounter{equation}{6}
					\begin{equation}
					\label{sigma2}
					\hspace{-7mm}\Sigma_d:
					\left\{
					\begin{array}
					{ll}
					\hspace{-2mm}\bar{\Sigma}_d:
					&
					\hspace{-15mm}\left\{
					\begin{array}
					{ll}
					\text{Iteration delay:}
					&
					N_{k} \in {\mathbb{N}}, k\in\mathbb{N},\\
					\text{Sampling/holding time sequence:}
					&
					\hspace{-3mm}\left\{
					\begin{array}
					{l}
					M_{k+1}=M_{k}+N_k, k\in\mathbb{N},\\
					M_{1}=0,
					\end{array}
					\right.\\
					{\text{Sampled-data control system $P_d$: }}
					&  
					\hspace{-3mm}\begin{cases}
					z_{s+1}=\bar{f}(z_s,{v_{k-1}}){=\mathbf{x}(\tau,z_s,{v_{k-1}})}{\in \mathbb{R}^n}, s\in { 
						[M_{k};M_{k+1}[,  
					}\text{ }{ k\in\mathbb{N},}\\
					\tilde{y}_s=z_s, 
					s\in { 
						\mathbb{N}_0,}\\
					{z_0{=x(0)}, \qquad v_0=\tilde{u}_0} \quad \text{  given,}
					\end{cases}   
					\end{array}
					\right.
					\\
					\text{Quantizer:}
					&
					y_{s}=[\tilde{y}_s]_{\mu_{\mathbf{X}}}{ 
					}, s\in {\mathbb{N}_0},\\
					{\text{Switch:}}
					&
					{w}_{k}=y_{s}, s=M_{k}, 
					{k\in\mathbb{N},}
					\\
					\text{Controller:}
					&
					\left\{
					\begin{array}
					{l}
					{\xi_{k}  \in f_C(\xi_{k-1},w_k),  \quad {\xi_{k}  \in \Xi_C, k \in \mathbb{N}\setminus \{1\}},
					}  
					\\
					{v_k  = h_C(\xi_k), \qquad {\hspace{5mm} v_k \in \mathbf{U}, \hspace{1.5mm} k \in \mathbb{N}}, 
					}\\
					\xi_{1}\in \Xi^0_C.
					\end{array}
					\right.
					\end{array}
					\right.
					\end{equation}
					\setcounter{equation}{7}
					\hrulefill
					\vspace*{4pt}
				\end{figure*}
				In equations (\ref{sigma2}), we replace the interconnected blocks ZoH, Plant and Sensor of (\ref{sigma}) by the nonlinear \emph{sampled-data control system} $P_d$, where
				\[
				{
					\bar{f}({x},{u}):=\mathbf{x}(\tau,{x},u), 
				}
				\]
				{for any ${x}\in{\mathbb{R}^n}$ and $u\in \mathbf{U}$,}
				{which is} the time discretization of the plant $P$ with sampling time $\tau$. 
				A sequence $\{z_s \}_{s \in {\mathbb{N}_0}}$ satisfying (\ref{sigma2}) for 
				some sequence $\{v_k \}_{k \in {\mathbb{N}_0}}$ is called a \textit{trajectory} of ${ \Sigma_d}$.  
				We stress that control sample $v_{k-1}$, designed at iteration $k-1$, is applied to the plant $P_d$ at iteration $k$; this delay in the iteration index translates into a physical delay $N_{k-1} \tau$ {for the} application of the new control sample; indeed, sample $v_{k-1}$ is applied at time $t=M_{k} \tau$, with $M_{k}=M_{k-1} + N_{k-1}$. {We give the following  result that is instrumental for the further developments.}

				
				\begin{proposition}
					\label{TeoEquivalence} \textit{ }
					\begin{itemize}
						\item[a)] 
						{ For any trajectory $x:{\mathbb{R}^+_0}\rightarrow \mathbb{R}^n$ of $\Sigma$ 
							there exists a trajectory $\{z_s \}_{s \in \mathbb{N}_0}$ of $\Sigma_d$ such that $z_s=x(\tau s)$ for all $s \in {\mathbb{N}_0}$;}
						\item[b)] 
						{ For any trajectory 
							$\{z_s \}_{s \in {\mathbb{N}_0}}$ of $\Sigma_d$ there exists a trajectory 
							$x:{\mathbb{R}^+_0}\rightarrow \mathbb{R}^n$ of $\Sigma$ such that $z_s=x(\tau s)$ for all $s \in {\mathbb{N}_0}$.}
					\end{itemize}
				\end{proposition}

				We now have all the ingredients to provide a system representation of $\bar{\Sigma}_d$. 
				
				\begin{figure}
					\begin{center}
						\includegraphics[scale=0.8]{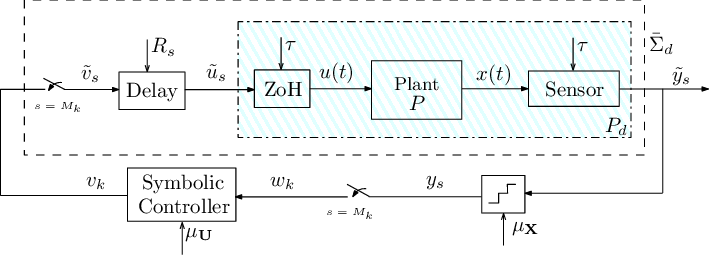}
						\caption{Illustration of $\Sigma_d$, which is formally described by the equations in (\ref{sigma2}). The sequence $\{\tilde{y}_s\}_{ s \in {\mathbb{N}_0}}$ includes all output samples of the sampled-data control system $P_d$. At each iteration $k$, the quantized output  ${w}_{k}=y_{s}=[\tilde{y}_s]_{\mu_{\mathbf{X}}}$  for  $s=M_{k}$ reaches the controller and a control input value $v_k$ is computed. The Delay block takes into account  the total delay ${N}_{k}$ of the NCS loop at iteration $k$ and the fact that the control value computed at iteration $k$ becomes active at iteration $k+1$.
						}
						\label{NCSpic3}
					\end{center}
				\end{figure}
				
				\begin{definition}
					Given $\bar{\Sigma}_d$ {in (\ref{sigma2}), with $N_k$ satisfying \eqref{eq:N_k}}, define the system 
					\[
					S(\bar{\Sigma}_d)=(X_{\tau},X_{0,\tau},\mathbf{U},\rTo_{\tau},Y_{\tau},H_{\tau}), 
					\]
					where
					\begin{itemize} 
						\item $X_{\tau}= \left( {\mathbb{R}^n} \times \{\tilde{u}_0\} \right) \cup \{ \left(x_{1},...,x_{N},\bar{u}\right)  \in {\mathbb{R}^{nN}} \times \mathbf{U}: \exists \underline{u}\in \mathbf{U} \text{ s.t. }  
						x_{i+1} =\bar{f}(x_{i},\underline{u}) \text{ }   \forall i$ $\in [1;N-1] 
						$, $N\in  [N_{\min};N_{\max}]$\};
						\item $X_{0,\tau}={\mathbb{R}^n} \times \{\tilde{u}_0\}$; 
						\item $x^{1}=({x}_0,\tilde{u}_0)\rTo^{u}_{\tau} x^{2}\!=\!\left(x_{1}^{2},...,x_{N_2}^{2},\bar{u}^2\right)$, if $x^1\in X_{0,\tau}$, $\bar{u}^{2}  =u$, $x_{1}^{2}  =\bar{f}(x_0,\tilde{u}_{0})$ and $x_{i+1}^{2}  =\bar{f}(x_{i}^{2},\tilde{u}_{0})$ for  $i\in [1;N_2 -1]$, $N_2\in [N_{\min};N_{\max}]$;
						\item $x^{1}\!=\!\left(x_{1}^{1},...,x_{N_1}^{1},\bar{u}^1\right)\rTo^{u}_{\tau} x^{2}\!=\!\left(x_{1}^{2},...,x_{N_2}^{2},\bar{u}^2\right)$, if $\bar{u}^{2}  =u$, $x_{1}^{2}  =\bar{f}(x^1_{N_{1}},\bar{u}^{1})$ and $x_{i+1}^{2}  =\bar{f}(x_{i}^{2},\bar{u}^{1})$ for  $i\in [1;N_2 -1]$, $N_1,N_2\in  [N_{\min};N_{\max}]$;
						\item $Y_{\tau}={\mathbb{R}^n \cup (\bigcup_{N\in [N_{\min};N_{\max}]} \mathbb{R}^{nN})}$;
						\item {$H_{\tau}\left({x}_0,\tilde{u}_0\right)=x_0$ for all ${x}_0\in {\mathbb{R}^n}$};
						\item $H_{\tau}\left(x_{1},x_{2},...,x_{N},\bar{u}\right)=\left(x_{1},x_{2},...,x_{N}\right)$, for all $\left(x_{1},x_{2},...,x_{N},\bar{u}\right)\in X_{\tau}$, $N\in [N_{\min};N_{\max}]$.
					\end{itemize}
					\label{def:system_Sigma}
				\end{definition}
				
				Note that $S(\bar{\Sigma}_d)$ is non-deterministic because, depending on the values of $N_{2}$ in the transition relation, {multiple} $u$-successor{s} of $x^{1}$ exist. 
				{ System $S(\bar{\Sigma}_d)$ can be regarded as a pseudometric system with the pseudometric $d_{Y_{\tau}}$ on $Y_{\tau}$ naturally induced by the metric { $d(x_{1},x_{2})=\Vert x_{1} - x_{2}\Vert$ on $\mathbb{R}^n$},  as follows. Given any $x^{i}=(x_{1}^{i},x_{2}^{i},...,x_{N_{i}}^{i},\bar{u}^{i})$, $i=1,2$, we set 
					\[
					d_{Y_{\tau}}(H_{\tau}(x^{1}),H_{\tau}(x^{2}))\!\!=\!\!
					\begin{cases}
					\max_{i\in[1;N_1]}\Vert x^{1}_{i}-x^{2}_{i}\Vert, & \hspace{-2mm} \text{if $N_{1}= N_{2}$};\\
					+\infty, & \hspace{-2mm}  \text{otherwise}.
					\end{cases}
					\]}
				Since the state vectors of $S(\bar{\Sigma}_d)$ are built from the trajectories of $P_d$ in $\bar{\Sigma}_d$, it is readily seen that:
				\begin{proposition}
					For any trajectory $\{z_s \}_{{s \in {\mathbb{N}_0}}}$ of 
					{ $\Sigma_d$,} {with $N_k$ satisfying \eqref{eq:N_k},} 
					there exists a state run
					\begin{equation}
					\underbrace{({x(0)},\tilde{u}_0)}_{x^0} \rTo^{\tilde{u}_{1}}  \underbrace{(\bar{x}^{1},\tilde{u}_{1})}_{x^1}  \rTo^{\tilde{u}_{2}}  \underbrace{(\bar{x}^{2},\tilde{u}_{2})}_{x^2}  \rTo^{\tilde{u}_{3}} \,\, {...}\,
					\label{cond1}
					\end{equation}
					of $S(\bar{\Sigma}_d)$ such that: 
					\begin{equation}
					\begin{array}
					{rclclcl}
					\{ {{x(0)}} & , & \underbrace{\bar{x}^{1}_{1}, {...}, \bar{x}^{1}_{N_{1}}}_{\bar{x}^{1}} & , & \underbrace{\bar{x}^{2}_{1}, {...}, \bar{x}^{2}_{N_{2}}}_{\bar{x}^{2}} & , & {...} \} = \{z_s \}_{{s \in {\mathbb{N}_0}}}.
					\end{array}
					\label{cond2}
					\end{equation}
					Conversely, for any state run (\ref{cond1}) of $S(\bar{\Sigma}_d)$, there exists a trajectory $\{z_s \}_{{s \in {\mathbb{N}_0}}}$ of  
					{ $\Sigma_d$} 
					such that (\ref{cond2}) holds.
					\label{theo:theo1}
				\end{proposition}

				\medskip
				
				Although system $S(\bar{\Sigma}_d)$ contains all the information of the NCS available at the sensor, it is not a finite model. Hereafter, 
				we illustrate the construction of 
				symbolic models that approximate possibly unstable NCS in the sense of strong alternating approximate simulation, whose definition is formally introduced in the Appendix. Our results rely on the assumption of existence of an incremental forward complete ($\delta$-FC) Lyapunov function for the plant of the NCS. 
				More formally:
				
				\begin{definition}
					\label{TH-IFC}
					\cite{MajidTAC11}
					A { continuously differentiable} function $V:{\mathbb{R}^n \times \mathbb{R}^n } \rightarrow \mathbb{R}_0^+$ is a $\delta$-FC Lyapunov function for the plant control system of the NCS if there exist a real number $\lambda\in\mathbb{R}$ and $\mathcal{K}_{\infty}$ functions $\underline{\alpha}$ and $\overline{\alpha}$ such that, for any $x_{1},x_{2}\in {\mathbb{R}^n}$ and any $u\in \mathbf{U}$, the following conditions hold:
					\begin{itemize}
						\item
						[(i)] $\underline{\alpha}(\Vert{x_{1}-x_{2}}\Vert)\leq V(x_{1},x_{2})\leq\overline{\alpha}(\Vert{x_{1}-x_{2}}\Vert)$,
						\item
						[(ii)] 
						$\frac{\partial{V}}{\partial{x_{1}}}(x_1,x_2)\, f(x_{1},u)\!+\! \frac{\partial{V}}{\partial{x_{2}}}(x_1,x_2)\, f(x_{2},u) \! \leq \! \lambda V(x_{1},x_{2})$.
					\end{itemize}
				\end{definition}
				
				We refer the interested reader to \cite{MajidTAC11} for further details on this notion. 
				In the following, we suppose the existence of a $\delta$-FC Lyapunov function $V$ for the control system $P$ in the NCS $\Sigma$ {and}
				of a $\mathcal{K}_{\infty}$ function $\gamma$ 
				such that $V(x,x^{\prime})-V(x,x^{\prime \prime})\leq\gamma(\Vert{x^{\prime}-x^{\prime \prime}}\Vert)$, for every $x,x^{\prime},x^{\prime\prime}\in {\mathbb{R}^n}$. We assume {without loss of generality} that $V$ is symmetric, i.e. $V(x_1,x_2)=V(x_2,x_1)$ for all $x_1,x_2 \in {\mathbb{R}^n}$. 
				
				\begin{definition}
					{Given $\bar{\Sigma}_d$ in (\ref{sigma2}), with $N_k$ satisfying \eqref{eq:N_k},}	define the system 
					\[
					S_{\ast}(\bar{\Sigma}_d):=(X_{\ast},X_{0,\ast},\mathbf{U},\rTo_{\ast},Y_{\ast},H_{\ast}),
					\]
					where  
					\begin{itemize}
						\item $X_{\ast}= \left( [{\mathbb{R}^n}]_{\mu_{{\mathbf{X}}}} \times \{\tilde{u}_0\} 
						\right) \cup \{ \left(x^{\ast}_{1},x^{\ast}_{2},...,x^{\ast}_{N}, \bar{u}_{\ast}\right)  \in [{\mathbb{R}^{nN}}]_{\mu_{{\mathbf{X}}}} \times \mathbf{U} : \exists \underline{u}_{\ast}\in \mathbf{U}  \text{ s.t. }$  
						${
							V([\bar{f}(x^{\ast}_{i},\underline{u}_{\ast})]_{\mu_{\mathbf{X}}},x^{\ast}_{i+1}) \leq }$ ${ 
							(e^{\lambda \tau}+2) \gamma(\mu_{\mathbf{X}}),}$ ${\forall i\in [1;N-1] }$, $N\in [N_{\min};N_{\max}]$\};
						\item $X_{0,\ast}=[{\mathbb{R}^n}]_{\mu_{{\mathbf{X}}}} \times \{\tilde{u}_0\}$, 
						\item $x^{1}=\left(x_0,\tilde{u}_0\right)\rTo^{u_{\ast}}_{\ast} x^{2}=\left(x_{1}^{2},...,x_{N_2}^{2},\bar{u}_{\ast}^{2}\right)$, if 
						{
							$x^{1}\in X_{0,\ast}$, $\bar{u}_{\ast}^{2}  =u_{\ast}$, }
						$N_2\in[N_{\min};N_{\max}]$,
						and
						\begin{align}
						\!\begin{cases}\!
						V([\bar{f}(x_0,\tilde{u}_0)]_{\mu_{\mathbf{X}}},x_{1}^{2}) \! \leq 
						(e^{\lambda \tau}\!+\!2) \gamma(\mu_{\mathbf{X}}),\\
						V([\bar{f}(x_{i}^{2},\tilde{u}_0)]_{\mu_{\mathbf{X}}},x_{i+1}^{2}) \! \leq 
						(e^{\lambda \tau}\!+\!2) \gamma(\mu_{\mathbf{X}})
						,   i \! \in\! [1; N_2-1];
						\end{cases}
						\label{eq:trans_rel_symb_0}
						\end{align}
						\item $x^{1}=\left(x_{1}^{1},...,x_{N_1}^{1},\bar{u}_{\ast}^{1}\right)\rTo^{u_{\ast}}_{\ast} x^{2}=\left(x_{1}^{2},...,x_{N_2}^{2},\bar{u}_{\ast}^{2}\right)$, if 
						{
							$\bar{u}_{\ast}^{2}  =u_{\ast}$, }
						$N_1,N_2\in[N_{\min};N_{\max}]$,
						and
						\begin{align}
						\!\begin{cases}\!
						V([\bar{f}(x_{N_1}^{1},\bar{u}_{\ast}^{1})]_{\mu_{\mathbf{X}}},x_{1}^{2}) \! \leq 
						(e^{\lambda \tau}\!+\!2) \gamma(\mu_{\mathbf{X}}),\\
						V([\bar{f}(x_{i}^{2},\bar{u}_{\ast}^{1})]_{\mu_{\mathbf{X}}},x_{i+1}^{2}) \! \leq 
						(e^{\lambda \tau}\!+\!2) \gamma(\mu_{\mathbf{X}})
						,   i \! \in\! [1; N_2-1];
						\end{cases}
						\label{eq:trans_rel_symb}
						\end{align}
						\item $Y_{\ast}=Y_{\tau}$;
						{\item $H_{\ast}\left({x}_0,\tilde{u}_0\right)=x_0$ for all ${x}_0\in [{\mathbb{R}^n}]_{\mu_{{\mathbf{X}}}}$;}
						\item  $H_{\ast}\left(x^{\ast}_{1},x^{\ast}_{2},...,x^{\ast}_{N},\bar{u}_{\ast}\right)=\left(x^{\ast}_{1},x^{\ast}_{2},...,x^{\ast}_{N}\right)$, for all $\left(x^{\ast}_{1},x^{\ast}_{2},...,x^{\ast}_{N},\bar{u}_{\ast}\right)\in X_{\ast}$, $N\in [N_{\min};N_{\max}]$.
					\end{itemize}
					\label{def:symb_model}
				\end{definition}
				
				System $S_{\ast}(\bar{\Sigma}_d)$ is pseudometric when $Y_{\ast}$ is equipped with the pseudometric $d_{Y_{\tau}}$. 
				%
				%
				%
				We can now present the following result. 
				
				\begin{theorem}
					\label{thmain}
					Consider $\bar{\Sigma}_d$ {in (\ref{sigma2}), with $N_k$ satisfying \eqref{eq:N_k},} and suppose that there exists a $\delta$-FC Lyapunov function $V$ for the control system $P$ in the NCS $\Sigma$. 
					Then, $S_{\ast}(\bar{\Sigma}_d) \preceq_{\varepsilon}^{s,\alt} S(\bar{\Sigma}_d)$ for any desired {accuracy} $\varepsilon\in\mathbb{R}^{+}$ and any state quantization $\mu_{{\mathbf{X}}}\in\mathbb{R}^{+}$ satisfying 
					\begin{equation}
					{
						\mu_{\mathbf{X}}=\hat{\mu}_\mathbf{X}/n_\mathbf{X}\leq \varepsilon,
					}
					\label{bisim_condition1}
					\end{equation}
					{
						for some integer $n_\mathbf{X}$.
					}
				\end{theorem}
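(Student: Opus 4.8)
The plan is to exhibit an explicit strong alternating $\varepsilon$-approximate simulation relation $\mathcal{R}\subseteq X_{\ast}\times X_{\tau}$ from $S_{\ast}(\bar{\Sigma}_d)$ to $S(\bar{\Sigma}_d)$ and to verify the three defining conditions (initial-state matching, $\varepsilon$-closeness of outputs, and the alternating matching of transitions) recalled in the Appendix. The natural candidate is the \emph{quantization relation}: a symbolic state $(x_1^{\ast},\dots,x_N^{\ast},\bar{u}_{\ast})$ is related to a concrete state $(x_1,\dots,x_{N'},\bar{u})$ iff they have the same length $N=N'$, carry the same stored control $\bar{u}_{\ast}=\bar{u}$, and satisfy $\Vert x_i^{\ast}-x_i\Vert\le\mu_{\mathbf{X}}$ for every $i$; the initial states are related analogously. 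Via the $\gamma$-property $V(x,x')-V(x,x'')\le\gamma(\Vert x'-x''\Vert)$ together with $V(p,p)=0$ and the symmetry of $V$, this relation implies the one-step bound $V(x_i^{\ast},x_i)\le\gamma(\mu_{\mathbf{X}})$, which is the quantity that actually drives the estimates.

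The initial-state and output conditions are then immediate. For every symbolic initial state $(q,\tilde{u}_0)$ with $q\in[\mathbf{X}_0]_{\mu_{\mathbf{X}}}\subseteq\mathbf{X}_0$ I would pick the concrete initial state $(q,\tilde{u}_0)$, trivially related; the converse initial matching required by the strong variant pairs each concrete $(x_0,\tilde{u}_0)$ with $([x_0]_{\mu_{\mathbf{X}}},\tilde{u}_0)$, at distance $\le\mu_{\mathbf{X}}$. For output proximity, since $H_{\tau}$ and $H_{\ast}$ merely project onto the state tuples and $d_{Y_{\tau}}$ is the componentwise maximum (and $+\infty$ when lengths differ), a related pair gives $d_{Y_{\tau}}(H_{\ast}(x_{\ast}),H_{\tau}(x))=\max_i\Vert x_i^{\ast}-x_i\Vert\le\mu_{\mathbf{X}}\le\varepsilon$ by \eqref{bisim_condition1}. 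Note this step uses only $\mu_{\mathbf{X}}\le\varepsilon$, with $\mu_{\mathbf{X}}\le\hat{\mu}_{\mathbf{X}}$ serving only to make the quantizer $[\cdot]_{\mu_{\mathbf{X}}}$ well defined.

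The heart is the alternating transition condition. Given a related pair and a symbolic input $u_{\ast}\in\mathbf{U}(x_{\ast})$, I would answer with the identical concrete input $u:=u_{\ast}$; note that the dynamics in both systems are driven by the \emph{stored} control $\hat{u}:=\bar{u}$, while $u$ only becomes the new stored tag. For an arbitrary concrete $u$-successor $x'=(c_1,\dots,c_{N_2},u)$, with $c_1=\bar{f}(x_{N},\hat{u})$, $c_{i+1}=\bar{f}(c_i,\hat{u})$ and $N_2\in[N_{\min};N_{\max}]$ chosen by the environment, I would produce the symbolic successor $x_{\ast}':=([c_1]_{\mu_{\mathbf{X}}},\dots,[c_{N_2}]_{\mu_{\mathbf{X}}},u)$, i.e. the componentwise quantization of the concrete successor. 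Relatedness $(x_{\ast}',x')\in\mathcal{R}$ is then immediate from $\Vert[c_i]_{\mu_{\mathbf{X}}}-c_i\Vert\le\mu_{\mathbf{X}}$, so the whole burden is to certify that $x_{\ast}'$ is a legal state and that $x_{\ast}\rTo^{u_{\ast}}_{\ast}x_{\ast}'$ is a legal transition of $S_{\ast}(\bar{\Sigma}_d)$, namely the Lyapunov inequalities \eqref{eq:trans_rel_symb}. Using the $\delta$-FC decay estimate $V(\bar{f}(a,\hat{u}),\bar{f}(b,\hat{u}))\le e^{\lambda\tau}V(a,b)$ (obtained from condition (ii) of Definition \ref{TH-IFC} by a comparison/Gr\"onwall argument, as in \cite{MajidTAC11}), the $\gamma$-property to absorb the two quantization errors, and the incoming bounds $V(x_N^{\ast},x_N)\le\gamma(\mu_{\mathbf{X}})$ and $V([c_i]_{\mu_{\mathbf{X}}},c_i)\le\gamma(\mu_{\mathbf{X}})$, the generic check reads
\[
\begin{aligned}
V\big([\bar{f}(x_N^{\ast},\hat{u})]_{\mu_{\mathbf{X}}},[c_1]_{\mu_{\mathbf{X}}}\big)
&\le V\big(\bar{f}(x_N^{\ast},\hat{u}),\bar{f}(x_N,\hat{u})\big)+2\gamma(\mu_{\mathbf{X}})\\
&\le e^{\lambda\tau}V(x_N^{\ast},x_N)+2\gamma(\mu_{\mathbf{X}})
\le(e^{\lambda\tau}+2)\gamma(\mu_{\mathbf{X}}),
\end{aligned}
\]
and identically for the interior indices with $x_N^{\ast},x_N$ replaced by $[c_i]_{\mu_{\mathbf{X}}},c_i$. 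The threshold $(e^{\lambda\tau}+2)\gamma(\mu_{\mathbf{X}})$ built into $S_{\ast}(\bar{\Sigma}_d)$ is thus \emph{exactly} what is needed to close the argument, and the same computation with $\hat{u}=\tilde{u}_0$ covers the first transition out of the initial states.

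I expect the main obstacle to be precisely this last estimate: terminating the chain at $(e^{\lambda\tau}+2)\gamma(\mu_{\mathbf{X}})$ requires charging the quantization of \emph{both} arguments at the level of $V$ through the $\gamma$-property (not through $\underline{\alpha},\overline{\alpha}$), exploiting the symmetry of $V$ to move the quantization between slots, and invoking the $\delta$-FC decay with the common input $\hat{u}$. A secondary nuisance is the bookkeeping of the $\bar{f}(\cdot)\in\mathbf{X}$ memberships in both transition relations: I would argue that $u_{\ast}\in\mathbf{U}(x_{\ast})$ already forces $\bar{f}(x_N^{\ast},\hat{u})\in\mathbf{X}$, and that for the constructed $x_{\ast}'$ the witness control $\underline{u}_{\ast}=\hat{u}$ can be used, reducing the remaining memberships to those guaranteed by existence of the matched concrete successor. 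Finally, I would stress that only a one-directional simulation (not a bisimulation) is obtainable here, since without a decay rate $\lambda<0$ one cannot reverse the roles and propagate the error backward; this is exactly why the stable case is handled separately in Subsection \ref{SubSec5.3}.
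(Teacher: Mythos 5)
Your proposal is correct and follows essentially the same route as the paper's own proof: the paper also takes the quantization relation (stated with $x_i^{\ast}=[x_i]_{\mu_{\mathbf{X}}}$ rather than your marginally looser $\Vert x_i^{\ast}-x_i\Vert\le\mu_{\mathbf{X}}$, which makes no difference to the estimates), answers a symbolic input with the identical concrete input, quantizes the concrete successor componentwise, and closes the argument with exactly your chain $V([\bar{f}(x_N^{\ast},\hat{u})]_{\mu_{\mathbf{X}}},[c_1]_{\mu_{\mathbf{X}}})\le e^{\lambda\tau}V(x_N^{\ast},x_N)+2\gamma(\mu_{\mathbf{X}})\le(e^{\lambda\tau}+2)\gamma(\mu_{\mathbf{X}})$. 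Your treatment of the $\bar{f}(\cdot)\in\mathbf{X}$ membership bookkeeping is in fact slightly more explicit than the paper's, which omits it entirely.
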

				
				\begin{proof}
					Consider the relation $\mathcal{R}\subseteq X_{\ast}\times X_{\tau}$ defined by $\left(x^{\ast},x\right)\in\mathcal{R}$ if and only if 
					$x^{\ast}=(x^{\ast}_{1},x^{\ast}_{2},...,x^{\ast}_{N},\bar{u}_{\ast})$, $x=(x_{1},x_{2},...,x_{N},\bar{u})$, for some $N$, 
					$x_{i}^{\ast}=[x_{i}]_{\mu_{\mathbf{X}}}$, for all $i\in [1;N]$, and $\bar{u}_{\ast}=\bar{u}$.
					We first prove condition (i) of Definition \ref{ASR} in the Appendix. 
					{By definition of $[{\mathbb{R}^n}]_{\mu_{{\mathbf{X}}}}$, for any $x^{\ast}=(x^{\ast}_0,\tilde{u}_0)\in X_{0,\ast}$, 
						there exists $x=(x_0,\tilde{u}_0)\in X_{0,\tau}$ 
						with $x_0^{\ast}=[x_{0}]_{\mu_{\mathbf{X}}}$.} 
					We now consider condition (ii) of Definition \ref{ASR}. For any $(x^{\ast},x)\in\mathcal{R}$, from the definition of the pseudometric $d_{Y_{\tau}}$, the definition of $\mathcal{R}$ and condition (\ref{bisim_condition1}) we get $d_{Y_{\tau}}(H_{\ast}(x^{\ast}),H_{\tau}(x))  = \max_{i} \Vert x^{\ast}_{i}-x_{i} \Vert \leq  
					\mu_{\mathbf{X}} \leq \varepsilon$. We now show condition (iii$''$). Consider any $(x^{\ast},x)\in\mathcal{R}$, with $x^{\ast}=(x^{\ast}_{1},x^{\ast}_{2},...,x^{\ast}_{N},\bar{u}_{\ast})$ and $x=(x_{1},x_{2},...,x_{N},\bar{u})$; then pick any $u=u_{\ast}\in \mathbf{U}$ and consider any transition $x \rTo^u_{\tau} \bar{x}$, with $\bar{x}=(\bar{x}_1,\bar{x}_2,...,\bar{x}_{\bar{N}},u)$, for some $\bar{N}$. Pick $\bar{x}^{\ast}=(\bar{x}^{\ast}_1,\bar{x}^{\ast}_2,...,\bar{x}^{\ast}_{\bar{N}},u_{\ast})$ defined by $\bar{x}^{\ast}_i=[\bar{x}_i]_{\mu_{\mathbf{X}}}$ for all $i\in [1;\bar{N}]$. 
					By definition of $\bar{x}^{\ast}$ we get $(\bar{x}^{\ast},\bar{x})\in\mathcal{R}$. We conclude the proof by showing that 
					$x^{\ast} \rTo^{u_{\ast}}_{\ast} \bar{x}^{\ast}$, i.e. it is a transition of $S_{\ast}(\bar{\Sigma}_d)$. 
					By using condition (ii) in Definition \ref{TH-IFC}, one has $\frac{\partial{V}}{\partial{x_{N}^{\ast}}}(x_{N}^{\ast},x_{N}) f(x_{N}^{\ast},\bar{u}_{\ast}) + \frac{\partial{V}}{\partial{x_{N}}}(x_{N}^{\ast},x_{N}) f(x_{N},\bar{u}) \leq \lambda V(x_{N}^{\ast},x_{N})$. By the definitions of $\gamma$, $\mathcal{R}$ and $ S(\bar{\Sigma}_d)$, and by integrating the previous inequality, the following holds:
					\begin{align}
					V([\bar{f}(x^{\ast}_{N},\bar{u}_{\ast})]_{\mu_{\mathbf{X}}},\bar{x}^{\ast}_{1})  & \! \leq \! V(\bar{f}(x^{\ast}_{N},\bar{u}_{\ast}),\bar{x}^{\ast}_{1}) + \gamma(\mu_{\mathbf{X}}) \nonumber \\
					& \! \leq \! V(\bar{f}(x^{\ast}_{N},\bar{u}_{\ast}),\bar{x}_1) +2\gamma(\mu_{\mathbf{X}}) \nonumber \\
					& \! \leq \!  e^{\lambda \tau} V(x^{\ast}_N,x_N)+2\gamma(\mu_{\mathbf{X}}) \nonumber \\
					& \!\leq  \! e^{\lambda \tau} (V(x^{\ast}_N,[x_N]_{\mu_{\mathbf{X}}})\!+\!\gamma(\mu_{\mathbf{X}}))\!+\!2\gamma(\mu_{\mathbf{X}})  \nonumber \\
					& \! = \! (e^{\lambda \tau}+2) \gamma(\mu_{\mathbf{X}}),
					\label{eq_chain_of_ineq}
					\end{align}
					where the last equality holds by condition (i) of Definition \ref{TH-IFC}. By similar computations, it is possible to prove that 
					$V([\bar{f}(\bar{x}^{\ast}_{i},\bar{u}_{\ast})]_{\mu_{\mathbf{X}}},\bar{x}^{\ast}_{i+1}) \leq (e^{\lambda \tau}+2) \gamma(\mu_{\mathbf{X}}), \text{ } i\in [1;\bar{N}-1]$. 
					Hence, from the inequality above, from (\ref{eq_chain_of_ineq}) and from the definition of the transition relation of $S_{\ast}(\bar{\Sigma}_d)$ in (\ref{eq:trans_rel_symb}), we get $x^{\ast} \rTo^{u_{\ast}}_{\ast} \bar{x}^{\ast}$.
				\end{proof}

				\section{NCS Symbolic Control Design}\label{sec:control}
				
				In this section we provide the solution to Problem \ref{problem0}, which is based on the use of the symbolic models proposed in the previous section. 
				We first design a symbolic controller {system} $S_{C^*}$ that solves an appropriate approximate similarity game associated with Problem \ref{problem0}. We then refine the controller {system} $S_{C^*}$ to a controller $C^*$ in the form of (\ref{symbC}) {which solves} Problem \ref{problem0}. 

				We start by reformulating the specification $Q$ in Problem \ref{problem0} in terms of the following system:
				\begin{equation}
				S(Q)=(X_q,X^{0}_Q,U_q,\rTo_{q},Y_q,H_q),
				\label{ext_spec}
				\end{equation}
				where 
				\begin{itemize}
					\item $X_{q}= {X}_Q^0 \cup \{ x=(x_{1},x_{2},...,x_{N})  \in X_Q^N, N\in [N_{\min};N_{\max}] |  (x_{i}, x_{i+1}) \in T_Q, i\in {[1;N-1]} \}$;
					\item $U_q=\{ u_q \}$, where $u_q$ is a dummy symbol;
					\item $x^{1}\rTo_{q}^{u_q} x^{2}$, if $x^{1} =(x^{1}_{1},...,x^{1}_{N_{1}})$, $x^{2} =(x^{2}_{1},...,x^{2}_{N_{2}})$ and $x^1_{N_1} \rTo_{Q} x^2_{1}$; 
					\item $Y_q=Y_{\tau}$;
					\item $H_q(x)=x$, for all $x\in X_{q}$.
				\end{itemize}
				
				We now consider the following symbolic control problem:
				\begin{problem}
					\label{problem} 
					Consider the specification $S(Q)$ in (\ref{ext_spec}), the system $S(\bar{\Sigma}_d)$, and a desired {accuracy} $\varepsilon\in\mathbb{R}^{+}$. Find 
					a symbolic controller system $S_C$, some  parameters $\theta,\mu_{\mathbf{X}}\in\mathbb{R}^{+}$ and a strong $A\theta A$ simulation relation $\mathcal{R}$ from $S_C$ to $S(\bar{\Sigma}_d)$ {such that}:
					\begin{itemize}
						\item [1)] the $\theta$-approximate feedback composition of $S(\bar{\Sigma}_d)$ and $S_C$, denoted $S(\bar{\Sigma}_d)\times_{\theta}^\mathcal{R} S_C$,  
						is approximately simulated\footnote{The notions of approximate feedback composition and of approximate simulation are formally recalled in the Appendix.} by $S(Q)$ with accuracy $\varepsilon$, i.e. $S(\bar{\Sigma}_d)\times_{\theta}^\mathcal{R} S_C  \preceq_{\varepsilon} S(Q)$;
						\item [2)] the system $S(\bar{\Sigma}_d)\times_{\theta}^\mathcal{R} S_C$ is non-blocking; 
						\item [3)] for any pair of states $x=(x_{1},x_{2},...,x_{N},u)$ and $x'=(x'_{1},x'_{2},...,x'_{N},u')$ of $S(\bar{\Sigma}_d)$ if $[x_i]_{\mu_{\mathbf{X}}}=[x'_i]_{\mu_{\mathbf{X}}}$ for all $i\in [1;N]$, then $\mathcal{R}^{-1}(\{x\})=\mathcal{R}^{-1}(\{x'\})$.
					\end{itemize}
				\end{problem}
				
				The control design problem above, except for condition 3), is known in the literature as an approximate similarity game (see e.g. \cite{paulo}). 
				Condition 1) requires the state trajectories of the NCS to be close to the state run of the specification $S(Q)$ up to the accuracy $\varepsilon$ irrespective of the particular realization of the network non-idealities, and condition 2) prevents deadlocks in the interaction between the plant and the controller. 
				Condition 3) requires that aggregate states of $S(\bar{\Sigma}_d)$ with the same quantization are indistinguishable {for} the controller. By adding condition 3)  and by using the notion of strong alternating simulation relation (embedded in the notion of approximate feedback composition), we {are able to deal with} approximate similarity games where state measurements are only available through their quantizations. 
				Symbolic control problems for control systems with quantized state measurements and safety and reachability specifications have been studied in \cite{GirardNA2013}{. We also recall the recent work \cite{RWR15} that extends \cite{GirardNA2013} to general specifications for the class of nonlinear systems. The present control problem extends those considered in \cite{GirardNA2013} to NCS and specifications expressed as non-deterministic transition systems. }
				
				In order to solve Problem \ref{problem}, some preliminary definitions and results are needed. 
				Given two systems $S_{i}=(X_{i},X_{0,i},U_{i},$ $\rTo_{i},Y_{i},H_{i})$ ($i=1,2$), $S_{1}$ is a \textit{sub-system} of $S_{2}$ 
				if $X_{1}\subseteq X_{2}$, $X_{0,1}\subseteq X_{0,2}$, $U_{1}\subseteq U_{2}$, $\rTo_{1}\subseteq \rTo_{2}$, $Y_{1}\subseteq Y_{2}$ and $H_{1}(x)=H_{2}(x)$ for any $x\in X_{1}$. 
				Moreover, given two sub-systems $S_{i}=(X_{i},X_{0,i},U_{i},$ $\rTo_{i},Y_{i},H_{i})$ ($i=1,2$) of a system $S$, define the union system $S_{1} \bigsqcup S_2$ as $(X_{1} \cup X_{2},X_{0,1}\cup X_{0,2},U_{1}\cup U_{2}, \rTo_{1} \cup \rTo_{2},Y_{1} \cup Y_{2},H)$, where $H(x)=H_{1}(x)$ is $x \in X_1$ and $H(x)=H_{2}(x)$ otherwise. Note that $S_{1} \bigsqcup S_2$ is a sub-system of $S$. It is easy to see that the union operator enjoys the associative property.   
				We now have all the ingredients to introduce the controller $S_{C^{\ast}}$ that will solve Problem \ref{problem}. 
				
				
				\begin{definition}\label{canon_contr}
					The symbolic controller $S_{C^{\ast}}$ is the maximal non-blocking sub-system\footnote{Here maximality is defined with respect to the preorder induced by the notion of sub-system.} $S_C$ of $S{_{\ast}}(\bar{\Sigma}_d)$ such that: 
					\begin{itemize}
						\item [1)] $S_C$ is approximately simulated by $S(Q)$ with accuracy $\mu_{\mathbf{X}}$,  i.e. $S_C \preceq_{\mu_{\mathbf{X}}} S(Q)$;
						\item [2)] $S_C$ is {strongly} alternatingly $0$-simulated by $S{_{\ast}}(\bar{\Sigma}_d)$, i.e. \\ $S_C \preceq^{s,\alt}_{0}  S{_{\ast}}(\bar{\Sigma}_d)$.
					\end{itemize}
				\end{definition}
				
				Condition 1) of the definition above requires that for any state run $r_c$ of $S_C$ there exists a state run $r_q$ in  $S(Q)$  such that $r_c$ approximates $r_q$ within the accuracy $\mu_{\mathbf{X}}$.
				Condition 2) ensures that the controller enforces the specification irrespective of the time-delay realization induced  by the communication network. The following result holds.
				
				\begin{proposition}
					The symbolic controller $S_{C^{\ast}}$ is the union of all non-blocking sub-systems $S_{C}$ of $S{_{\ast}}(\bar{\Sigma}_d)$ satisfying conditions 1) and 2) of Definition  \ref{canon_contr}.
				\end{proposition}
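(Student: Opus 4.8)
The plan is to show that the collection $\mathcal{F}$ of all non-blocking sub-systems of $S_{\ast}(\bar{\Sigma}_d)$ satisfying conditions 1) and 2) of Definition \ref{canon_contr} is closed under arbitrary union; this identifies the union as the greatest element of $\mathcal{F}$, which is exactly $S_{C^{\ast}}$. Concretely, I would let $\{S_{C_i}\}_{i\in I}$ enumerate the members of $\mathcal{F}$ and set $S_U:=\bigsqcup_{i\in I} S_{C_i}$. Since a union of sub-systems of $S_{\ast}(\bar{\Sigma}_d)$ is again a sub-system of $S_{\ast}(\bar{\Sigma}_d)$ and $\bigsqcup$ is associative, it suffices to prove that $S_U$ is non-blocking and satisfies conditions 1) and 2). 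Once this is done, every $S_{C_i}$ is by construction a sub-system of $S_U$, so $S_U$ is the maximum of $\mathcal{F}$; conversely $S_{C^{\ast}}\in\mathcal{F}$ by Definition \ref{canon_contr}, whence $S_{C^{\ast}}=S_U$, which is the assertion.

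Non-blockingness of $S_U$ is immediate: any state of $S_U$ lies in some $S_{C_i}$, which is non-blocking, so it already has an outgoing transition in $S_{C_i}$ and hence in $S_U$. For condition 2) I would take the identity relation $\mathcal{R}^{\mathrm{id}}=\{(x,x)\}$ as the candidate strong $A0A$ simulation relation from $S_U$ to $S_{\ast}(\bar{\Sigma}_d)$: the output and initial-state requirements of Definition \ref{ASR} hold trivially (accuracy $0$, and inclusion of initial states), so the essential point is the alternating successor condition. Read through the identity relation at accuracy $0$, condition 2) for a single $S_{C_i}$ forces it to retain \emph{all} $\Post$-successors in $S_{\ast}(\bar{\Sigma}_d)$ of every input it keeps at a state; that is, a member of $\mathcal{F}$ may prune inputs but never individual successors. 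This property survives the union of input sets: if $u\in U_{S_U}(x)$ then $u\in U_{S_{C_i}}(x)$ for some $i$, whence $\Post^{S_{\ast}}_{u}(x)\subseteq \Post^{S_{C_i}}_{u}(x)\subseteq \Post^{S_U}_{u}(x)$, which is exactly the strong alternating successor condition for $\mathcal{R}^{\mathrm{id}}$.

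The hard part, and the step I expect to be the main obstacle, is condition 1), i.e. $S_U\preceq_{\mu_{\mathbf{X}}}S(Q)$. The natural candidate is the union $\mathcal{R}_U=\bigcup_{i}\mathcal{R}_{S_{C_i}}$ of the witnessing approximate simulation relations. Conditions (i) and (ii) pass to $\mathcal{R}_U$ easily, since (i) concerns only initial states (each inherited from some $S_{C_i}$) and (ii) is the pointwise proximity bound $d_{Y_{\tau}}(H_{\ast}(x),H_q(x_q))\leq\mu_{\mathbf{X}}$, a property of the pair alone. The difficulty is the matching condition (iii): given $(x,x_q)\in\mathcal{R}_U$ witnessed, say, by $\mathcal{R}_{S_{C_i}}$, a transition $x\rTo^{u}_{U}x'$ of the union may be \emph{inherited from a different} sub-controller $S_{C_j}$ (the one with $u\in U_{S_{C_j}}(x)$), and then $\mathcal{R}_{S_{C_i}}$ carries no information about how to match it. This is genuine, because ordinary (non-alternating) simulation is not in general preserved under union of source systems. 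I would resolve it by combining two structural facts. First, by condition 2) the inherited transition $x\rTo^{u}_{U}x'$ is literally a transition of $S_{C_j}$, hence matchable in $S(Q)$ whenever $x$ is tracked through $\mathcal{R}_{S_{C_j}}$. Second, the target $S(Q)$ carries the single dummy input $u_q$ and is therefore input-deterministic, its nondeterminism living entirely in the choice of successor tuple; consequently a single related specification state $x_q$ can service \emph{every} input $u$ that $S_U$ offers at $x$, provided each corresponding successor $x'$ admits a $\mu_{\mathbf{X}}$-close $T_Q$-admissible successor from $x_q$. Proving that such a common $x_q$ always exists — equivalently, that the maximal $\mu_{\mathbf{X}}$-approximate simulation relation from $S_U$ to $S(Q)$ still contains every initial state of $S_U$ — is the technical heart, and I would establish it by a coinductive argument: saturate $\mathcal{R}_U$ so that a pair is retained only when its $x_q$ services all inherited inputs, and verify that the resulting relation is a $\mu_{\mathbf{X}}$-approximate simulation relation for $S_U$.

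With non-blockingness and conditions 1) and 2) verified for $S_U$, we have $S_U\in\mathcal{F}$ and $S_U$ dominates every member of $\mathcal{F}$; hence $S_U$ is the maximal non-blocking sub-system of Definition \ref{canon_contr}, i.e. $S_{C^{\ast}}=\bigsqcup_{i\in I}S_{C_i}$, as claimed.
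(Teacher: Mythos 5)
Your overall skeleton --- take the union of all members of the family, show that the union is itself a non-blocking member, and conclude that it is the maximum and therefore equals $S_{C^{\ast}}$ --- is the same as the paper's, which carries it out for two members $S_C$, $S'_C$ and invokes associativity of $\bigsqcup$. The non-blockingness argument is identical. For condition 2) you diverge: the paper takes the union $\mathcal{R}_b\cup\mathcal{R}'_b$ of the witnessing strong $A0A$ simulation relations, while you use the identity relation together with a ``retention'' property (a member may prune inputs but never successors of a kept input). Your variant is in fact the more careful route, and it works because at accuracy $0$ the output pseudometric, together with the fact that a $u$-successor in $S_{\ast}(\bar{\Sigma}_d)$ stores $u$ as its last component, forces related states to coincide, so identity pairs propagate forward from the initial states; note, however, that this only forces retention on the \emph{reachable} part of each member, so the identity relation as you define it can fail condition (iii$''$) at unreachable states a member happens to contain.

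The genuine gap is at condition 1), exactly where you place it. You are right that this step is not automatic: if $(x,x_q)$ lies in the witnessing relation of $S_{C_i}$, a transition of $x$ inherited from a different member $S_{C_j}$ need not be matchable from $x_q$, since $x_q$ need not be related to $x$ by the relation of $S_{C_j}$. (The paper's proof passes over precisely this point: it asserts ``by definition of operator $\bigsqcup$'' that $\mathcal{R}_a\cup\mathcal{R}'_a$ is a $\mu_{\mathbf{X}}$-approximate simulation from $S_{C}\bigsqcup S'_{C}$ to $S(Q)$, with no supporting argument.) But your proposed repair --- a coinductive saturation producing a common specification state $x_q$ that services all inherited inputs --- is not carried out, and it cannot be carried out in general, because conditions 1) and 2) for the individual members do not imply that such a common $x_q$ exists. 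Take $N_{\min}=N_{\max}=1$ and a specification with two distinct initial states $x_q\neq x'_q$, both within $\mu_{\mathbf{X}}$ of the same quantized initial plant state $z_0$, such that $T_Q$ continues $x_q$ only toward the region reached by applying input $u_1$ and continues $x'_q$ only toward a far-away region reached by applying $u_2$. The sub-controller applying $u_1$ (simulated by $S(Q)$ through $x_q$) and the one applying $u_2$ (through $x'_q$) are each non-blocking and satisfy 1) and 2), but their union offers both inputs at $(z_0,\tilde{u}_0)$, and neither $x_q$ nor $x'_q$ can match both branches within $\mu_{\mathbf{X}}$; your saturation would therefore delete every initial pair, violating condition (i) of Definition \ref{ASR}. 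So the union need not satisfy condition 1): the step you call the technical heart is not merely unproven but false without additional hypotheses (for instance, requiring distinct states of $X_Q$ to be separated by more than $2\mu_{\mathbf{X}}$, which makes the servicing specification state unique and reconciles the member relations on reachable states). Your attempt is thus incomplete at its decisive step --- though, in fairness, it makes explicit an obstruction that the paper's own proof silently assumes away.
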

				
				\begin{proof}
					Let $S_{C}$ and $S'_C$ be a pair of non-blocking sub-systems of $S_{\ast}(\bar{\Sigma}_d)$ satisfying both conditions 1) and 2) of Definition  \ref{canon_contr}. 
					Let $\mathcal{R}_a$ (resp. $\mathcal{R}'_a$) be a $\mu_{\mathbf{X}}$-approximate simulation relation from $S_{C}$ (resp. $S'_{C}$) to $S(Q)$. 
					Let $\mathcal{R}_b$ (resp. $\mathcal{R}'_b$) be a strong alternating $0$-approximate simulation relation from $S_{C}$ (resp. $S'_{C}$) to $S_{\ast}(\bar{\Sigma}_d)$. 
					Consider the system $S_{C}\bigsqcup S'_C$. By definition of operator $\bigsqcup$, relation $\mathcal{R}_a \cup \mathcal{R}'_a$ is a $\mu_{\mathbf{X}}$-approximate simulation from $S_{C} \bigsqcup S'_{C}$ to $S(Q)$, and relation $\mathcal{R}_b \cup \mathcal{R}'_b$ is a strong alternating $0$-approximate simulation from $S_{C} \bigsqcup S'_{C}$ to $S_{\ast}(\bar{\Sigma}_d)$. Hence, $S_{C} \bigsqcup S'_{C}$ satisfies condition 1) and 2) of Definition \ref{canon_contr}. Moreover, since $S_{C}$ and $S'_C$ are non-blocking, again by definition of operator $\bigsqcup$, system $S_{C} \bigsqcup S'_{C}$ is non-blocking as well. Finally, since 
					$S_{C^{\ast}}$ is the union of all non-blocking sub-systems $S_{C}$ of $S_{\ast}(\bar{\Sigma}_d)$ satisfying conditions 1) and 2) of Definition \ref{canon_contr}, it is the maximal non-blocking sub-system $S_C$ of $S_{\ast}(\bar{\Sigma}_d)$ satisfying conditions 1) and 2) of Definition \ref{canon_contr}.
				\end{proof}
				
				{Although $S{_{\ast}}(\bar{\Sigma}_d)$ is countable, since the set $\mathbf{X}$ is bounded and $S(Q)$ is symbolic, the controller system $S_{C^{\ast}}$ is symbolic and} can be computed in a finite number of steps  
				by adapting standard fixed point characterizations of {simulation}  \cite{ModelChecking,paulo}. 
				We  now  provide the solution {to} Problem \ref{problem}. 
				
				\begin{theorem}
					Consider the NCS $\Sigma$ and the specification $S(Q)$. Suppose that there exists a $\delta$-FC Lyapunov function $V$ for the control system $P$  in the NCS $\Sigma$.   
					For any desired {accuracy} $\varepsilon\in\mathbb{R}^{+}$, choose the parameters $\theta,\mu_{{\mathbf{X}}}
					\in\mathbb{R}^{+}$ such that:
					\begin{equation}
					{
						\mu_{{\mathbf{X}}}+\theta  \leq\varepsilon\label{solving_condition_1}}
					\end{equation}
					
					{ with $\mu_{\mathbf{X}}=\hat{\mu}_\mathbf{X}/n_\mathbf{X}$, for some integer $n_\mathbf{X}$.} 
					Then a strong $A \theta A$ simulation relation $\mathcal{R}$ from $S_{C^{\ast}}$ to $S(\bar{\Sigma}_d)$ exists solving Problem \ref{problem} with $S_C=S_{C^{\ast}}$.
					\label{Mmain}
				\end{theorem}
				
				\begin{proof}
					By condition 2) in Definition \ref{canon_contr}, a (non-empty) strong $A0A$ simulation relation $\mathcal{R}_1$ from $S_{C^*}$ to $S_{\ast}(\bar\Sigma_d)$ exists. Let $\mathcal{R}_2$ be the relation defined in the proof of Theorem \ref{thmain}. 
					Since there exists a $\delta$-FC Lyapunov function for the plant $P$ and condition (\ref{solving_condition_1}) holds, by Theorem {\ref{thmain}}, $\mathcal{R}_2$ is a strong $A\theta A$ simulation relation from $S_{\ast}(\bar\Sigma_d)$ to $S(\bar\Sigma_d)$. 
					Define the relation $\mathcal{R}=\mathcal{R}_1 \circ \mathcal{R}_2$. By Lemma 1 (ii), $\mathcal{R}$ is a strong $A\theta A$ simulation relation from $S_{C^*}$ to $S(\bar\Sigma_d)$. 
					We start by showing condition 1) of Problem \ref{problem}. 
					The existence of $\mathcal{R}_1$ and $\mathcal{R}_2$ implies by Definition \ref{ASR} that $S_{C^{\ast}} \preceq^{s,\alt}_{0} S_{\ast}(\bar{\Sigma}_d)$ and $S_{\ast}(\bar{\Sigma}_d) \preceq_{\theta}^{s,\alt} S(\bar{\Sigma}_d)$. Hence, from Lemma \ref{lemma1} (ii) in the Appendix, by combining the previous implications, one gets $S_{C^{\ast}}\preceq_{\theta}^{s,\alt} S(\bar{\Sigma}_d)$ which, by Lemma \ref{lemma1} (iii), leads to ${S(\bar{\Sigma}_d)\times^{\mathcal{R}}_{\theta} S_{C^{\ast}}} \preceq_{\theta} S_{C^{\ast}}$. 
					Since $S_{C^{\ast}} \preceq_{\mu_{\mathbf{X}}} S(Q)$ by condition 1) in Definition \ref{canon_contr}, Lemma \ref{lemma1} (ii) and condition (\ref{solving_condition_1})  imply $S(\bar{\Sigma}_d)\times^{\mathcal{R}}_{\theta} S_{C^{\ast}} \preceq_{\varepsilon} S(Q)$.  
					We now show that condition 2) holds. Consider any state $(x,x_c)$ of $S(\bar{\Sigma}_d)\times^{{\mathcal{R}}}_{\theta} S_{C^{\ast}}$. Pick any $u_{c}\in U_{c}(x_c)$, which is a non-empty set because $S_{C^{\ast}}$ is non-blocking. Since $(x_c,x)\in \mathcal{R}$,  
					for any $x\rTo_{\tau}^{u} x^{\prime}$ in $S(\bar{\Sigma}_d)$ there exists $x_c\rTo_{c}^{u}x^{\prime}_{c}$ in $S_{C^{\ast}}$ with $(x^{\prime}_{c},x^{\prime})\in \mathcal{R}$. Hence, from Definition \ref{composition}, the transition $(x,x_{c}) \rTo^u (x^{\prime},x^{\prime}_{c})$ is in $S(\bar{\Sigma}_d)\times^{{\mathcal{R}}}_{\theta} S_{C^{\ast}}$, implying that $S(\bar{\Sigma}_d)\times^{\mathcal{R}}_{\theta} S_{C^{\ast}}$ is non-blocking. 
					We conclude by showing condition 3). 
					Consider a pair of states $x=(x_{1},x_{2},...,x_{N},u)$ and $x'=(x'_{1},x'_{2},...,x'_{N},u')$ of $S(\bar{\Sigma}_d)$ such that $[x_i]_{\mu_{\mathbf{X}}}=[x'_{i}]_{\mu_{\mathbf{X}}}$ for all $i\in [1;N]$. Since $\mathcal{R}_2^{-1}(\{x\})=\{[x]_{\mu_{\mathbf{X}}}\}$, $\mathcal{R}_2^{-1}(\{x'\})=\{[x']_{\mu_{\mathbf{X}}}\}=\{[x]_{\mu_{\mathbf{X}}}\}$, by recalling that  $\mathcal{R}^{-1}(\{x\})=\mathcal{R}_1^{-1}(\mathcal{R}_2^{-1}(\{x\}))$ and 
					$\mathcal{R}^{-1}(\{x'\})=\mathcal{R}_1^{-1}(\mathcal{R}_2^{-1}(\{x'\}))$, we get condition 3). 
				\end{proof}
				
				
				We now proceed with a further step by refining the controller $S_{C^\ast}$ solving Problem \ref{problem} to a controller {$C^{\ast}$} in form of (\ref{symbC}) which can be applied to the original NCS and {solves} Problem \ref{problem0}.
				Let {$U_{C^{\ast}}(\cdot)$} and $\Post (\cdot)$ be the operators defined in Definition \ref{ssystem} but applied to system $S_{C^{\ast}}$.  
				Let  $S_{C^{\ast}}=(X_{C^{\ast}},X_{0,C^{\ast}},U_{C^{\ast}},\rTo_{C^{\ast}},$ $Y_{C^{\ast}},H_{C^{\ast}})$. Define $\Xi_C={X_{C^{\ast}}}$,  $\Xi^0_C=  X_{0,C^{\ast}}$ and
				{
					\begin{equation}
					\begin{cases}
					h_C(\xi)  \in U_{C^{\ast}}(\xi), \\ 
					f_C(\xi,w)  = \{\xi'=\left(\xi'_1,...,\xi'_{N'},\bar{u}\right) \in \Post_{h_C(\xi)} (\xi):\xi'_{N'}=w\},
					\end{cases}
					\label{eq:controller_fun}
					\end{equation}
					for any $(\xi,w) \in 
					\Xi_C \times [\mathbf{X}]_{\mu_{{\mathbf{X}}}}$.}  
				Note from the first line in (\ref{eq:controller_fun}) that the controller $C$, as in (\ref{symbC}), derived from a non-blocking non-deterministic system $S_{C^{\ast}}$ is not uniquely determined, since {$U_{C^{\ast}}(\xi)\ne \varnothing$} may  not be a singleton. Moreover, 
				the second line in (\ref{eq:controller_fun})} takes into account that {$\xi'_{N'}$} is the state of the aggregate vector $x^{\ast}$ in {$\xi'$} which is required to match the output sample $w$, sent through the plant-to-controller branch of the network and reaching the controller (as illustrated in Section \ref{sec:modelingNCS}).  We conclude this section by proving the formal correctness of the controller $C^{\ast}$ as defined above.
			
			\begin{theorem}
				Assume that the conditions of Theorem \ref{Mmain} hold, implying the existence of some parameters $\theta,\mu_{\mathbf{X}}\in\mathbb{R}^{+}$ {satisfying the inequality in (\ref{solving_condition_1}),}
				{ with $\mu_{\mathbf{X}}=\hat{\mu}_\mathbf{X}/n_\mathbf{X}$ for some integer $n_\mathbf{X}$}, of a symbolic controller system $S_C=S_{C^{\ast}}$ and of a strong $A\theta A$ simulation relation $\mathcal{R}$ from $S_C$ to $S(\bar{\Sigma}_d)$ solving Problem \ref{problem}. Set $\mathbf{X}_0$ such that $\mathcal{R}(X_{0,C^{\ast}})={{\mathbf{X}}_0}  \times \{\tilde{u}_0\}$. Then the controller $C^{\ast}$ solves Problem \ref{problem0}.
			\end{theorem}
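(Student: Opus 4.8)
The plan is to show that every trajectory of the closed-loop NCS $\Sigma$ controlled by $C^{\ast}$ and started in $\bar{\mathbf{X}}_0$ corresponds to a state run of the feedback composition $S(\bar{\Sigma}_d)\times^{\mathcal{R}}_{\theta} S_{C^{\ast}}$, and then to push the guarantee $S(\bar{\Sigma}_d)\times^{\mathcal{R}}_{\theta} S_{C^{\ast}}\preceq_{\varepsilon} S(Q)$ supplied by Theorem \ref{Mmain} down to the plant trajectory. First I would fix an arbitrary output sequence $\{\tilde{y}_s\}_{s\in\mathbb{N}_0}$ produced by $\Sigma$ under $C^{\ast}$ with $\tilde{y}_0\in\bar{\mathbf{X}}_0$, for some admissible realization of the network non-idealities. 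By Theorem \ref{TeoEquivalence} this sequence coincides with the sampled trajectory $\{z_s\}=\{x(s\tau)\}$ of $\bar{\Sigma}_d$, and by Theorem \ref{theo:theo1}, packaging the samples into the aggregate vectors $x^k=(\bar{x}^k,v_k)$ dictated by the delay sequence $\{N_k\}$ yields a state run $x^0\rTo^{v_1}_{\tau} x^1\rTo^{v_2}_{\tau}\cdots$ of $S(\bar{\Sigma}_d)$ whose output run, read through $H_{\tau}$, recovers $\{\tilde{y}_s\}$.

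The central step is to certify that this run is actually a run of the closed loop $S(\bar{\Sigma}_d)\times^{\mathcal{R}}_{\theta} S_{C^{\ast}}$. Here I would use the construction (\ref{eq:controller_fun}) of $C^{\ast}$: at iteration $k$ the controller has state $\xi_k\in X_{\ast}$, receives the quantized measurement $w_k=[x(M_k\tau)]_{\mu_{\mathbf{X}}}$, which by the domain condition $Dom_C$ matches the last component of $\xi_k$, and emits $v_k=h_C(\xi_k,w_k)\in U(\xi_k)$ together with $\xi_{k+1}\in\Post_{v_k}(\xi_k)$. I would prove by induction on $k$ that $(\xi_k,x^k)\in\mathcal{R}$: the base case uses $\tilde{y}_0\in\bar{\mathbf{X}}_0$ together with $\mathcal{R}(X_{0,C^{\ast}})=\bar{\mathbf{X}}_0\times\{\tilde{u}_0\}$ to pick a matching controller initial state, and the inductive step uses the strong alternating nature of $\mathcal{R}$, so that whatever delay $N_k$ the network adversarially selects, the controller transition $\xi_k\rTo^{v_k}_{C^{\ast}}\xi_{k+1}$ matches the plant transition $x^k\rTo^{v_k}_{\tau} x^{k+1}$ with $(\xi_{k+1},x^{k+1})\in\mathcal{R}$. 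Condition 3) of Problem \ref{problem}, guaranteeing that aggregate states sharing the same quantization are $\mathcal{R}^{-1}$-indistinguishable, is exactly what makes $C^{\ast}$ well defined from the partial (quantized) information $w_k$; non-blockingness of $S_{C^{\ast}}$ guarantees $U(\xi_k)\neq\varnothing$, so the loop never stalls. Together these facts realize the pair-run as a run of $S(\bar{\Sigma}_d)\times^{\mathcal{R}}_{\theta} S_{C^{\ast}}$.

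Once the correspondence is in place, I would invoke condition 1) of Problem \ref{problem}, i.e. $S(\bar{\Sigma}_d)\times^{\mathcal{R}}_{\theta} S_{C^{\ast}}\preceq_{\varepsilon} S(Q)$, to obtain a state run of $S(Q)$ whose output run is $\varepsilon$-close to $\{\tilde{y}_s\}$ in the pseudometric $d_{Y_{\tau}}$. Unpacking the aggregate states of $S(Q)$, whose components lie in $X_Q$ with consecutive pairs in $T_Q$ and whose output map $H_q$ is the identity, I would read off the sequence $\{x_Q^s\}_{s\in\mathbb{N}_0}$ with $x_Q^0\in X_Q^0$. The $T_Q$-consecutiveness inside each aggregate state together with the guard $x^1_{N_1}\rTo_{Q} x^2_1$ on the transitions of $S(Q)$ yields condition 1) of Problem \ref{problem0}, while the componentwise definition of $d_{Y_{\tau}}$ as a maximum of norms converts the $\varepsilon$-closeness (which also forces the aggregate lengths $N_k$ to agree, the pseudometric being $+\infty$ otherwise) into $\Vert\tilde{y}_s-x_Q^s\Vert\leq\varepsilon$ at every sampling index $s$, which is condition 2).

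The hard part is the middle paragraph: rigorously matching the Mealy-machine closed loop against the feedback composition uniformly over all network-delay and packet-dropout realizations. The delicate points are the iteration-index shift (the sample $v_{k-1}$ designed at iteration $k-1$ is applied at iteration $k$), the verification that the matching condition $\xi_N=w_k$ is always met so that $C^{\ast}$ is non-blocking on the actual measurement stream, and the use of the \emph{strong} (as opposed to ordinary) alternating simulation so that the controller can react to the adversarially chosen delay $N_k$ rather than having to commit before it is revealed.
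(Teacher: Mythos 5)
Your route coincides with the paper's own proof: lift the sampled closed-loop trajectory to a state run of $S(\bar{\Sigma}_d)$ via Theorems \ref{TeoEquivalence} and \ref{theo:theo1}, show by induction that pairing it with the controller states yields a state run of $S(\bar{\Sigma}_d)\times^{\mathcal{R}}_{\theta} S_{C^{\ast}}$, then invoke condition 1) of Problem \ref{problem} and unpack the resulting run of $S(Q)$ into $\{x_Q^s\}_{s\in\mathbb{N}_0}$ to get conditions 1) and 2) of Problem \ref{problem0}. However, your central inductive step has a genuine gap exactly where you locate ``the hard part.'' The strong alternating simulation property of $\mathcal{R}$ only guarantees that, for the input the controller emits and for every adversarially delayed plant transition, \emph{there exists} a controller transition preserving $\mathcal{R}$. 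But in the closed loop the controller's next state is an arbitrary element of $f_C(\xi_k,w_k)=\Post_{h_C(\xi_k,w_k)}(\xi_k)$, chosen through the Mealy machine's own nondeterminism; nothing in your argument shows that this \emph{actual} choice is the matching one. Appealing to condition 3) of Problem \ref{problem} does not fill the hole: that condition constrains $\mathcal{R}^{-1}$ on quantization-equivalent plant states, i.e.\ what the controller can distinguish, not how its transition nondeterminism is resolved.

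The paper closes this gap by never arguing with the abstract relation $\mathcal{R}$ alone but with its factorization $\mathcal{R}=\mathcal{R}_1\circ\mathcal{R}_2$ from the proof of Theorem \ref{Mmain}. Since $\mathcal{R}_1$ is a strong $A0A$ simulation relation, its zero precision forces output (hence componentwise) equality, so the inductive hypothesis $(\xi_{k+1},x^k)\in\mathcal{R}$ actually \emph{pins} the controller state: $\xi_{k+1}=x^k_{\ast}=\left([\bar{x}^{k}_{1}]_{\mu_{\mathbf{X}}},\dots,[\bar{x}^{k}_{N_{k}}]_{\mu_{\mathbf{X}}},v_k\right)$. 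The plant successor $x^{k+1}$ is then uniquely determined by the realized delay and uniqueness of solutions of (\ref{NCSeq1}), and the matching controller successor produced by the alternating property of $\mathcal{R}_1$ is identified with the controller's actual successor $\xi_{k+2}$, using Eq.\ (\ref{eq:eq_many}) and the fact that $S_{C^{\ast}}$ is a sub-system of $S_{\ast}(\bar{\Sigma}_d)$. Without this forcing argument the induction does not close. A secondary, fixable slip: your pairing is off by one. The run of the composition is $(x^0,\xi_1)\rTo^{v_1}(x^1,\xi_2)\rTo^{v_2}\cdots$, so the invariant is $(\xi_{k+1},x^k)\in\mathcal{R}$, and the transition of $S(\bar{\Sigma}_d)$ into $x^{k+1}$ carries the label $v_{k+1}$ (the control component of the target state), while the flow during that transition is driven by $v_k$ (the label of the source); writing $x^k\rTo^{v_k}_{\tau}x^{k+1}$ conflates the two, which matters precisely because you flag the iteration-index shift as one of the delicate points.
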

			
			\begin{proof}
				Consider the strong $A\theta A$ simulation relation $\mathcal{R}=\mathcal{R}_1 \circ \mathcal{R}_2$ from $S_{C^{\ast}}$ to $S(\bar{\Sigma}_d)$ defined in the proof of {Theorem {\ref{Mmain}}. 
					Now consider any $\tilde{y}_0={x(0)}\in \mathbf{X}_0$,  implying that $x^0=({x(0)},\tilde{u}_0)\in\mathcal{R}(X_{0,C^{\ast}})$ by definition of  ${{\mathbf{X}}_0}$. Then consider the state $\xi_1:=([{x(0)}]_{\mu_{\mathbf{X}}},\tilde{u}_0)$; by definition of $\mathcal{R}$ we get $\xi_1\in\mathcal{R}^{-1}(x^0)$, 
					implying that $\xi_1 \in X_{0,C^{\ast}}$. From the first line in the refinement equation (\ref{eq:controller_fun}), the control input $v_1= h_C(\xi_1)\in U_{C^{\ast}}(\xi_1)$ is uniquely determined. 
					Furthermore, since  $(\xi_1,x^0)\in\mathcal{R}$, which is a strong $A\theta A$ simulation relation  from $S_{C^{\ast}}$ to $S(\bar{\Sigma}_d)$, then $v_1 \in {\mathbf{U}}(x^0)$ in $S(\bar{\Sigma}_d)$ and, for any transition $x^0 \rTo^{v_{1}} x^1=(\bar{x}^1,v_1)=((\bar{x}^{1}_{1}, {...}, \bar{x}^{1}_{N_{1}}),v_1)$ in $S(\bar{\Sigma}_d)$, there exists  a transition $\xi_1 \rTo^{v_{1}}  \xi_2=(({\xi}_{2,1}, {...}, {\xi}_{2,N_{1}}),v_1)$ in $S_{C^*}$ such that $(\xi_2,x^1)\in \mathcal{R}$, implying $\xi_{2,N_1}=[x^1_{N_1}]_{\mu_{\mathbf{X}}}$ from the definition of $\mathcal{R}$. 
					By induction, assume now $(\xi_{k},x^{k-1}) \in \mathcal{R}$ for some $k\in \mathbb{N}$, with $x^{k-1}$ in the form $x^{k-1}=(\bar{x}^{k-1},v_{k-1})$, and again by  exploiting the non-blocking property of $S_{C^{\ast}}$, the definition of $\mathcal{R}$ and the refinement equation (\ref{eq:controller_fun}), it is readily seen that by choosing $v_{k}= h_C(\xi_{k}) \in U_{C^{\ast}}(\xi_{k})$, then one has $v_{k} \in {\mathbf{U}}(x^{k-1})$ in $S(\bar{\Sigma}_d)$ and, for any transition $x^{k-1} \rTo^{v_{k}} x^{k}=(\bar{x}^{k},v_{k})=((\bar{x}^{k}_{1}, {...}, \bar{x}^{k}_{N_{k}}),v_{k})$ in $S(\bar{\Sigma}_d)$, there exists  a transition $\xi_{k} \rTo^{v_{k}}  \xi_{k+1}=(({\xi}_{k+1,1}, {...}, {\xi}_{k+1,N_{k}}),v_{k})$ in $S_{C^*}$ such that $(\xi_{k+1},x^{k})\in \mathcal{R}$, implying $\xi_{k+1,N_{k}}=[x^{k}_{N_{k}}]_{\mu_{\mathbf{X}}}$ from the definition of $\mathcal{R}$. 
					As a result of the procedure above, we built an infinite sequence $\{(\xi_{k},x^{k-1})\}_{k\in \mathbb{N}}\subseteq \mathcal{R}$ and two infinite state runs $\xi_1 \rTo^{v_{1}}  \xi_2 \rTo^{v_{2}}\xi_3 \rTo^{v_{3}}  \,\, {...}\,$ and  $x^0 \rTo^{v_{1}} x^1 \rTo^{v_{2}} x^2 \rTo^{v_{3}}  \,\, {...}\,$ in $S_{C^*}$ and $S(\bar{\Sigma}_d)$, respectively. By Definition \ref{composition} of approximate feedback composition, this implies that  
					\begin{equation}
					(x^0,\xi_1) \rTo^{{v}_{1}}  (x^1,\xi_2) \rTo^{{v}_{2}}(x^2,\xi_3) \rTo^{{v}_{3}}  \,\, {...}\,
					\label{cond_2_tris}
					\end{equation}
					is an infinite state run of $S(\bar{\Sigma}_d)\times_{\theta}^\mathcal{R} S_{C^*}$.
					From {Proposition} \ref{theo:theo1}, the existence of an infinite state run  $x^0 \rTo^{v_{1}} x^1 \rTo^{v_{2}} x^2 \rTo^{v_{3}}  \,\, {...}\,$ in $S(\bar{\Sigma}_d)$ implies the existence of an infinite trajectory $\{\tilde{y}_s \}_{{s \in \mathbb{N}_0}}=\{z_s \}_{s \in \mathbb{N}_0}$ of 
					$\Sigma_d$
					such that 
					\begin{align}
					\{ {{x(0)}}  , \quad \underbrace{\bar{x}^{1}_{1}, {...}, \bar{x}^{1}_{N_{1}}}_{\bar{x}^{1}}  , \quad \underbrace{\bar{x}^{2}_{1}, {...}, \bar{x}^{2}_{N_{2}}}_{\bar{x}^{2}}  , {...} \} = \{z_s \}_{s \in \mathbb{N}_0}= \{\tilde{y}_s \}_{s \in \mathbb{N}_0}.
					\label{cond2_bis}
					\end{align}
					From the definition of quantizer and switch in \eqref{sigma2}, one can write, for any $k\in \mathbb{N} \setminus\{1\}$,  $w_{k}=y_{M_{k}}=[\tilde{y}_{M_{k}}]_{\mu_{\mathbf{X}}}=[x^{k-1}_{N_{k-1}}]_{\mu_{\mathbf{X}}}=\xi_{{k},N_{k-1}}$. This implies, from the second line in (\ref{eq:controller_fun}), that $\xi_{k} \in f_C(\xi_{k-1},w_{k})$, so the evolution of the controller in (\ref{symbC}) is well defined at all iterations $k$. Finally, from {Proposition} \ref{TeoEquivalence}, the existence of the trajectory $\{z_s \}_{s \in \mathbb{N}_0}$ {of  $\Sigma_d$} in \eqref{cond2_bis} implies that there exists a trajectory $x:[0,+\infty[\rightarrow \mathbb{R}^n$ of the NCS $\Sigma$ such that $\tilde{y}_s=z_s=x(\tau s)$ for all $s \in \mathbb{N}_0$. This concludes the proof that any sequence $\{\tilde{y}_s\}$ generated by the NCS is defined for all $s \in \mathbb{N}_0$.
				}
				Since the assumptions of Theorem \ref{Mmain} hold,  condition 1) of Problem \ref{problem} is fulfilled by the controller $S_{C^*}$, i.e. $S(\bar{\Sigma}_d)\times_{\theta}^\mathcal{R} S_{C^*}  \preceq_{\varepsilon} S(Q)$. Hence,  Definition \ref{ASR} (approximate simulation) implies that, for any initial state $(x^0,\xi_1)$ of $S(\bar{\Sigma}_d)\times_{\theta}^\mathcal{R} S_{C^*} $, there exists $x_q^0 \in X_{Q}^{0}$ such that $d_{Y_{\tau}}(H_{\tau}(x^0),H_q(x_q^0))  = \Vert {x(0)}-x_q^0 \Vert  \leq   \varepsilon$, and  the existence of a state run (\ref{cond_2_tris}) in $S(\bar{\Sigma}_d)\times_{\theta}^\mathcal{R} S_{C^*} $ implies the existence of a state run 
				\begin{equation}
				x_q^0 \rTo^{u_{q}}_q  x_q^1 \rTo^{u_{q}}_q x_q^2 \rTo^{u_{q}}_q  \,\, {...}\,
				\label{eq_state_run_q}
				\end{equation}
				in $S(Q)$, with $x_q^k$ in the form $x^k_q=(x^k_{q,1},...,x^k_{q,N_k})$, such that $d_{Y_{\tau}}(H_{\tau}(x^k),H_q(x_q^k))  = \max_{i} \Vert \bar{x}^{k}_{i}-x^k_{q,i} \Vert  \leq \varepsilon$, implying
				\begin{equation}
				\Vert \bar{x}^k_i - x^k_{q,i} \Vert \leq \varepsilon, \quad \forall k \in {\mathbb{N}} \text{ and }  \forall i=1,...,N_k.
				\label{component_epsilon}
				\end{equation}
				In turn, from the definition of specification $Q$,  the existence of a state run in $S(Q)$ in Eq. (\ref{eq_state_run_q}) implies the existence in $Q$ of the transitions $(x^s_Q,x^{s+1}_Q)\in T_Q$, for all $s\in \mathbb{N}_0$, such that: 
				\begin{align}
				\{ {{x}_q^{0}} \quad ,  \underbrace{{x}^{1}_{q,1}, {...}, {x}^{1}_{q,N_{1}}}_{{x}_q^{1}}  , \quad \underbrace{{x}^{2}_{q,1}, {...}, {x}^{2}_{q,N_{2}}}_{{x}_q^{2}}  ,  {...} \} = \{x^s_Q \}_{s \in \mathbb{N}_0}.
				\label{cond3_bis}
				\end{align}
				Hence, condition 1)  of Problem \ref{problem0} holds. Finally, by (\ref{cond2_bis}), (\ref{cond3_bis}), and (\ref{component_epsilon}), we get condition 2) of Problem \ref{problem0}. 
			\end{proof}
			
			\section{Application to Robot Motion Planning \\ with Remote Control}\label{sec:example}
			Symbolic techniques for robot motion planning and control have been {successfully}  exploited in the literature, see e.g. \cite{BeltaIEEERob2007} and the references therein. However, existing work does not consider the symbolic control of robot motion over non-ideal communication networks. In this section we exploit the remote control of an electric car-like  robot, with limited power, sensing, computation and communication capabilities, whose goal is the surveillance of an area. 
			The motion of the robot $P$ is described by means of the following nonlinear control system:
			
			\begin{equation}
			\left[
			\begin{array}
			[l]{l}
			\dot{x}_1\\
			\dot{x}_2\\
			\dot{x}_3
			\end{array}
			\right] =
			\left[
			\begin{array}
			[c]{c}
			u_1 \frac{\cos(x_3+\delta(u_2))}{\cos(\delta(u_2))}\\
			u_1 \frac{\sin(x_3+\delta(u_2))}{\cos(\delta(u_2))}\\
			\frac{u_1}{b}\tan(u_2)
			\end{array}
			\right],
			\label{example1}
			\end{equation}
			
			where $\delta(u_2)=\arctan\left(\frac{a\tan(u_2)}{b}\right)$, $a=0.5$ is the distance of the center of mass from the rear axle and $b=1.5$ is the wheel base, see Fig. \ref{fig:sim} (left panel) {(modified from Fig. 2.16 in \cite{astrom2010feedback})}. 
			States $x_1$ and $x_2$ are the $2$D-coordinates of the center of mass of the vehicle and state  $x_3$ is  its heading angle, while the inputs  $u_1$ and $u_2$ are the velocity of the rear wheel and the steering angle,  respectively. Note that $u_1$ is always nonnegative to guarantee that the vehicle does not move backwards. All the quantities are expressed in units of the International System (SI). {We consider an {accuracy} $\varepsilon=0.02$, and the bounded set including all the specification trajectories up to $\varepsilon$ is}
			$\mathbf{X}={\left[-x_{1,\max},x_{1,\max}\right[\times \left[-x_{2,\max},x_{2,\max}\right[ \times \left[-x_{3,\max},x_{3,\max}\right[}$, and $u\in \mathbf{U} {\subset\left[0,u_{1,\max}\right[\times \left[-u_{2,\max},u_{2,\max}\right[}$, 
			where $x_{\max}=[x_{1,\max},x_{2,\max},x_{3,\max}]'=[50,50,\pi]'$ 
			and $u_{\max}=[u_{1,\max},u_{2,\max}]'=[5,\frac{\pi}{3}]'$. 
			The model above is known in the literature as \emph{single-track} vehicle model and is widely used because, in spite of its simplicity, it well captures the major features of interest of the vehicle cornering behavior \cite{gillespie1992vehicle}. 
			The robot $P$ is remotely connected to a controller, implemented on a shared CPU, by means of a non-ideal communication network. The control loop forms a NCS, as the one in Fig. 1, whose network/computation parameters are $B_{\min} = 0.1 \text{ kbit}/s$, $B_{\max} =1 \text{ kbit}/s$, $\tau =1 s$, $\Delta_{\min}^{\ctrl}=0.01 s$, $\Delta_{\max}^{\ctrl} =0.1 s$, $\Delta^{\req}_{\min} =0.05 s$, $\Delta^{\req}_{\max} =0.2 s$, $\Delta_{\min}^{\net}=0.1 s$, $\Delta_{\max}^{\net}=0.25 s$. {Given the different nature of the three state variables,} the state quantization is assumed to be different (in absolute values) for each component and equal to $x_{i,\max}/100$ for the state $x_i$ ($i=1,2,3$), so that we have {$200$} quantization values for each state component. {Similarly,} we assume the input quantization to be equal to $u_{i,\max}/5$ for the input $u_i$ ($i=1,2$) and the network protocols to introduce a relative overhead which is bounded by the $20 \%$ of the total number of data bits ($N_{\cp}^{+}=N_{\pc}^{+}=0.2$). This implies $ \vert [\mathbf{X}]_{\mu_{{\mathbf{X}}}} \vert ={200^3}$ and $\vert \mathbf{U} \vert ={50}$, hence  {$\Delta_{{\min}}^{B,\pc}=0.0276 s$, $\Delta_{{\max}}^{B,\pc}=0.276 s$, $\Delta_{{\min}}^{B,\cp}=0.0072 s$, $\Delta_{{\max}}^{B,\cp}=0.072 s$}. We assume there may be packet dropouts, with the constraint that two consecutive dropouts are not allowed ($N_{\pd}=1$). 
			The motion planning problem considered here is described in the following. We require that the robot leaves its support (HOME location) and visits (in the exact order) two buildings, denoted by $B1$ and $B2$, to then reach an outlet where it possibly powers up the battery (CHARGE location). Finally, the vehicle returns HOME. During the whole path, the robot is requested to avoid some obstacles, such as walls and other buildings. We denote the union of the obstacles locations as the UNSAFE location. 
			We now start applying the results in Section \ref{sec:SymbolicModels} regarding the design of a symbolic model for the given NCS. According to the definition of $\Sigma_d$, 
			the minimum and maximum delays in a single iteration of the network amount to {$\Delta_{\min}=0.24 s$  and $\Delta_{\max}=2.07 s$}, respectively. From (\ref{eq:N_min_max}), this results in $N_{\min}=1$, $N_{\max}=3$. 
			In order to have a uniform quantization in the state space, we apply the results to a normalized plant $\tilde{P}$, whose state  is the one of $P$, but component-wise normalized with respect to $x_{\max}$. According to the previous {description} of the NCS, this results in {$\hat{\mu}_{\mathbf{X}}=1$, $n_{\mathbf{X}}=200$ and $\mu_{\mathbf{X}}=0.005$}. We assume that the normalized signals are sent through the network and the static block implementing the coordinate change from $P$ to $\tilde{P}$ and vice versa (omitted in the general scheme) {is} physically connected to the sensor. 
			It is possible to show that the quadratic Lyapunov-like function $V(x,x^{\prime})=0.5\, \Vert x-x^{\prime}\Vert_{2}^{2}$, is $\delta$-FC for 
			control system (\ref{example1}), with $\lambda=\frac{2u_{1,\max}}{\cos(\delta(u_{2,\max}))}$, $\underline{\alpha}(r)=0.5 r^2$, $\overline{\alpha}(r)=1.5 r^2$ and $\gamma(r)=6 r$; hence Theorem \ref{thmain} can be applied. 
			In the symbolic control design step, we apply the results illustrated in Section \ref{sec:control}. We first construct a finite transition system $Q$ which encodes a number of randomly generated trajectories satisfying the given specification. 
			For the choice of $\theta =0.0125$, Theorem \ref{Mmain} holds and the controller $S_{C^{\ast}}$ in Definition \ref{canon_contr} solves the control problem. 
			Estimates of the space complexity in constructing $S_{C^{\ast}}$ indicate $4\cdot10^{13}$ 32-bit integers. 
			Because of the large computational complexity in building the controller $S_{C^{\ast}}$, we do not construct the
			whole symbolic model $S_{\ast}(\bar{\Sigma}_d)$, from which deriving $S_{C^{\ast}}$, but only the part of $S_{\ast}(\bar{\Sigma}_d)$ that can implement (part of) the specification $Q$; similar ideas were explored in \cite{BorriCDC2012}, see also \cite{PolaTAC12}.
			The total memory occupation and time required to construct  $S_{C^{\ast}}$ are respectively $3742$ $32$-bit integers and $2833\,$s. 
			The computation has been performed on the Matlab suite through an Apple MacBook Pro with 2.5GHz Intel Core i5 CPU and 16 GB RAM. 
			In Fig. \ref{fig:sim} (right panel)
			, we show a sample path of the NCS (blue solid line), for a particular realization of the network uncertainties, compared to the trajectory of the system controlled through an ideal network (black dash-dotted line). 
			Each time delay realization $N_k$ is sampled from a discrete uniform random distribution over $[N_{\min};N_{\max}]$. 
			As a result, the NCS used just $59$ control samples, in spite of the $94$ control samples (one at each $\tau$) used in the ideal case. 
			Note that, although the behavior of the NCS is not as regular as in the ideal case, the specification is indeed met.
			
			\begin{figure*}
				\centering
				\subfigure
				{\includegraphics[scale=0.4]{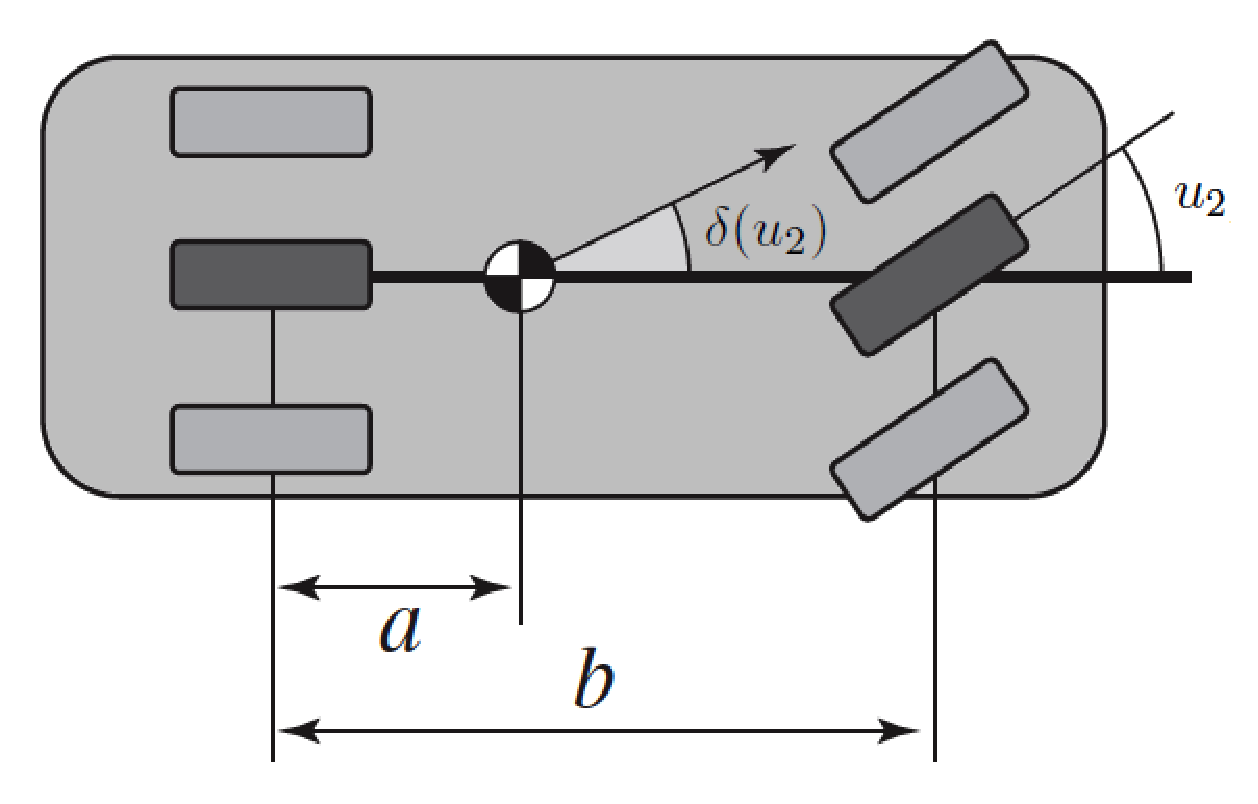}}
				\hspace{0.5mm}
				\subfigure
				{\includegraphics[scale=0.44]{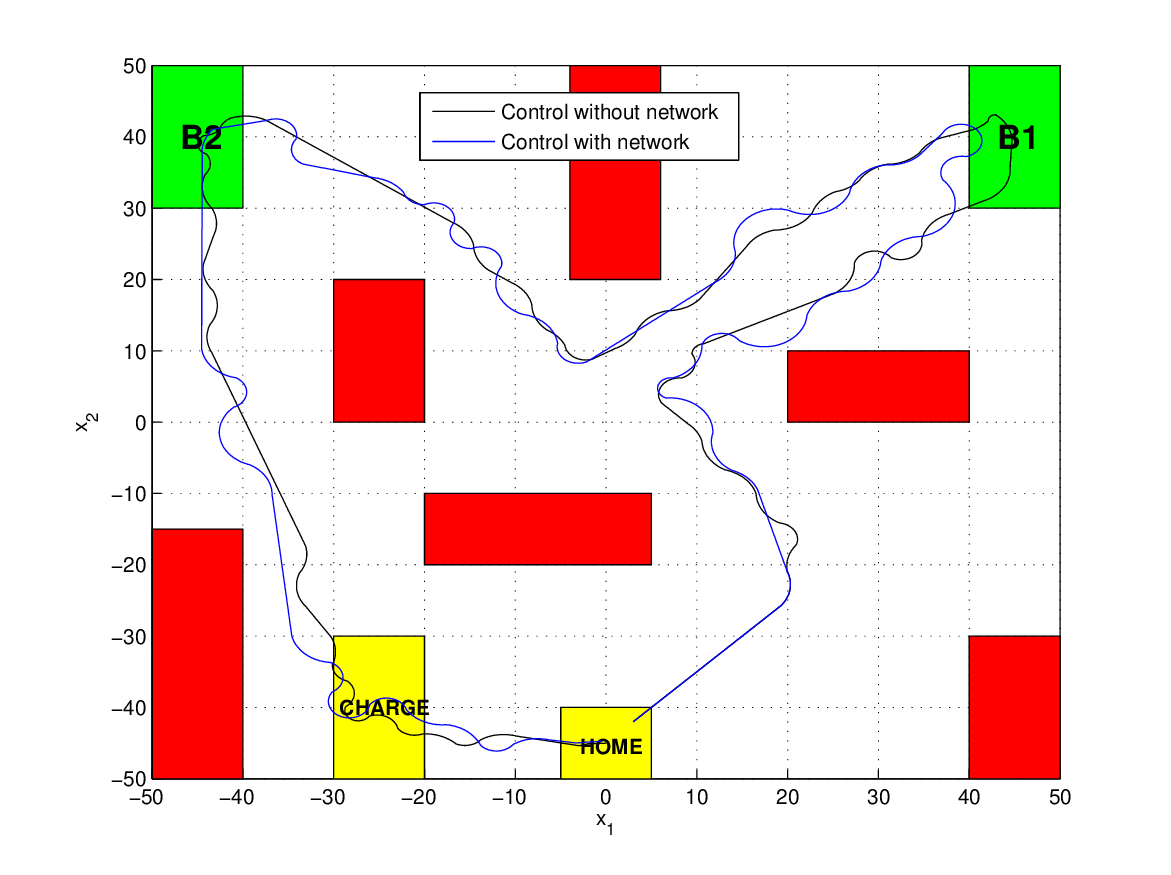}}
				\caption{Overhead view of the robot dynamics (top panel). Space trajectory of the vehicle (bottom panel). }
				\label{fig:sim}
			\end{figure*}
			
			\section{Conclusions}\label{sec:conclusion}
			In this paper we proposed a symbolic approach to the control design of nonlinear NCS, where the most important non-idealities in the communication channel are taken into account. Under the assumption of existence of incremental forward complete Lyapunov functions, we derived symbolic models that approximate NCS in the sense of strong alternating approximate simulation. 
			NCS symbolic control design, where specifications are expressed in terms of transition systems, was then solved and applied to an example 
			{of remote robot motion planning}.
			
			
			\section*{Acknowledgements}
			The authors are grateful to Pierdomenico Pepe for fruitful discussions on the
			topic of this article.
			
			\bibliographystyle{abbrv}
			\bibliography{biblio1}
			
			\appendix
			We here recall from \cite{AB-TAC07,PolaSIAM2009}, {the notion} of (alternating) approximate {simulation} relations and introduce the  {notion} of strong alternating { approximate simulation} relations. Approximate feedback composition is also introduced and adapted from \cite{paulo}. 
			
			\begin{definition}
				\label{ASR}
				Let $S_{i}=(X_{i},X_{0,i},U_{i},\rTo_{i},$ $Y_{i},H_{i})$ ($i=1,2$) be (pseudo)metric systems with the same output sets $Y_{1}=Y_{2}$ and (pseudo)metric $d$, and let $\varepsilon\in\mathbb{R}^{+}_{0}$ be a given {accuracy}. Consider a relation $\mathcal{R}\subseteq X_{1}\times X_{2}$ satisfying the following conditions:
				\begin{itemize}
					\item
					[(i)] $\forall x_{1}\in X_{0,1}$ $\exists x_{2}\in X_{0,2}$ such that $(x_{1},x_{2})\in \mathcal{R}$;
					\item [(ii)] $\forall (x_{1},x_{2})\in \mathcal{R}$, \mbox{$d(H_{1}(x_{1}),H_{2}(x_{2}))\leq\varepsilon$}.
				\end{itemize}
				Relation $\mathcal{R}$ is an \emph{$\varepsilon$-approximate simulation relation} from $S_{1}$ to $S_{2}$ if it enjoys conditions (i), (ii) and the following one: 
				\begin{itemize}
					\item[(iii)] $\forall (x_{1},x_{2})\in \mathcal{R}$ if \mbox{$x_{1}\rTo_{1}^{u_{1}}x'_{1}$} then \mbox{$\exists  x_{2}\rTo_{2}^{u_{2}}x'_{2}$} such that $(x^{\prime}_{1},x^{\prime}_{2})\in \mathcal{R}$.
				\end{itemize}
				System $S_{1}$ is $\varepsilon$-simulated by $S_{2}$ or $S_{2}$ $\varepsilon$-simulates $S_{1}$, denoted \mbox{$S_{1}\preceq_{\varepsilon}S_{2}$}, if there exists an $\varepsilon$-approximate simulation relation from $S_{1}$ to $S_{2}$.
				Relation $\mathcal{R}$ is an \emph{alternating $\varepsilon$-approximate ($A\varepsilon A$) simulation relation} from $S_{1}$ to $S_{2}$ if it enjoys conditions (i), (ii) and the following one: 
				\begin{itemize}
					\item
					[(iii$'$)] $\forall (x_{1},x_{2})\in\mathcal{R}$ $\forall u_{1}\in U_{1}(x_1)$ $\exists u_{2}\in U_{2}(x_2)$ such that \mbox{$\forall x_{2}\rTo_{2}^{u_{2}}x'_{2}$} \mbox{$\exists x_{1}\rTo_{1}^{u_{1}}x'_{1}$} with $(x_{1}^{\prime},x_{2}^{\prime} )\in\mathcal{R}$. 
				\end{itemize}
				Relation $\mathcal{R}$ is a \emph{strong alternating $\varepsilon$-approximate (strong $A\varepsilon A$) simulation relation} from $S_{1}$ to $S_{2}$ if it enjoys conditions (i), (ii) and the following one: 
				\begin{itemize}
					\item [(iii$''$)] $\forall (x_{1},x_{2})\in\mathcal{R}$ $\forall u_{1}\in U_{1}(x_1)$, $u_{2}=u_1\in U_{2}(x_2)$ and \mbox{$\forall x_{2}\rTo_{2}^{u_{2}}x'_{2}$} \mbox{$\exists x_{1}\rTo_{1}^{u_{1}}x'_{1}$} such that $(x_{1}^{\prime},x_{2}^{\prime} )\in\mathcal{R}$. 
				\end{itemize}
				System $S_{1}$ is strongly alternatingly $\varepsilon$-simulated by $S_{2}$ or $S_{2}$ strongly alternatingly $\varepsilon$-simulates $S_{1}$, denoted \mbox{$S_{1}\preceq_{\varepsilon}^{s,\alt} S_{2}$}, if there exists a strong $A\varepsilon A$ simulation relation from $S_{1}$ to $S_{2}$. 
			\end{definition}
			
			The  notion of strong $A \varepsilon A$ {simulation} relation has been inspired by the notion of feedback refinement relations recently introduced in \cite{RWR15}. 
			Interaction between plants and controllers in the systems domain is formalized as follows:
			\begin{definition}
				\cite{paulo}
				\label{composition}
				Consider a pair of (pseudo)metric systems $S_{i}=(X_{i},X_{0,i},U_{i},\rTo_{i},$ $Y_{i},H_{i})$ ($i=1,2$) with the same output sets
				$Y_{1}=Y_{2}$ and (pseudo)metric $d$, and let $\varepsilon\in\mathbb{R}^{+}_{0}$ be a given {accuracy}. Let $\mathcal{R}$ be a strong $A \varepsilon A$ simulation relation from $S_{2}$ to $S_{1}$.
				The $\varepsilon$-approximate feedback composition of $S_{1}$ and $S_{2}$, with composition relation $\mathcal{R}$, is the system 
				$S_{1}\times^{\mathcal{R}}_{\varepsilon}S_{2}=(X,X_{0},U,\rTo,Y,H)$, where 
				$X=\mathcal{R}^{-1}$,
				$X_{0}=X\cap(X_{0,1}\times X_{0,2})$,
				$U=U_{1}$, 
				$(x_{1},x_{2})\rTo^{u}(x_{1}^{\prime},x_{2}^{\prime})$ if $x_{1}\rTo_{1}^{u}x_{1}^{\prime}$ and $x_{2}\rTo_{2}^{u}x_{2}^{\prime}$,
				$Y=Y_{1}$ and 
				$H(x_{1},x_{2})=H_{1}(x_{1})$ for any $(x_{1},x_{2})\in X$.
			\end{definition}
			We conclude with a useful technical lemma.
			
			\begin{lemma}
				\cite{paulo}
				Let \mbox{$S_{i}=(X_{i},X_{0,i},U_{i},\rTo_{i},Y_{i},H_{i})$} ($i=1$, $2$, $3$) be (pseudo)metric systems with the same output sets $Y_{1}=Y_{2}=Y_{3}$ and (pseudo)metric $d$. Then, the following statements hold: 
				\begin{itemize}
					\item
					[(i)] for any $\varepsilon_{1}\!\!\leq \!\varepsilon_{2}$, $S_{1}\preceq_{\varepsilon_{1}}^{(s,\alt)}S_{2}$ implies \mbox{$S_{1}\preceq_{\varepsilon_{2}}^{(s,\alt)}S_{2}$}; 
					\item [(ii)] if $S_{1}\preceq^{(s,\alt)}_{\varepsilon_{12}}S_{2}$ with relation $\mathcal{R}_{12}$ and $S_{2}\preceq^{(s,\alt)}S_{3}$ with relation $\mathcal{R}_{23}$ then $S_{1}\preceq^{(s,\alt)}_{\varepsilon_{12}+\varepsilon_{23}}S_{3}$ with relation $\mathcal{R}_{12} \circ \mathcal{R}_{23}$; 
					\item [(iii)] for any $\varepsilon\in\mathbb{R}^{+}_{0}$ and any strong $A\varepsilon A$ simulation relation $\mathcal{R}$ from $S_2$ to $S_1$, $S_{1} \times^{\mathcal{R}}_{\varepsilon} S_{2} \preceq_{\varepsilon} S_{2}$.
				\end{itemize}
				\label{lemma1}
			\end{lemma}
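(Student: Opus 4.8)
The plan is to prove each of the three statements by exhibiting an explicit witnessing relation and verifying conditions (i), (ii) and the relevant successor condition of Definition \ref{ASR}. The arguments are uniform across the plain, alternating and strong alternating variants, so I would carry out the details for the strong alternating case (condition (iii$''$)), the others being analogous. Statement (i) is immediate: conditions (i) and (iii$''$) make no reference to the precision, and condition (ii) only requires $d(H_1(x_1),H_2(x_2)) \leq \varepsilon_1 \leq \varepsilon_2$, so the very relation witnessing $S_1 \preceq^{s,\alt}_{\varepsilon_1} S_2$ also witnesses $S_1 \preceq^{s,\alt}_{\varepsilon_2} S_2$.

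For statement (ii) I would take $\mathcal{R} = \mathcal{R}_{12} \circ \mathcal{R}_{23}$ as the candidate relation from $S_1$ to $S_3$ and verify the three conditions. Condition (i) follows by chaining the initialization conditions: any $x_1\in X_{0,1}$ has a related $x_2\in X_{0,2}$, which in turn has a related $x_3\in X_{0,3}$, so $(x_1,x_3)\in\mathcal{R}$. Condition (ii) is exactly the triangle inequality for the pseudometric $d$, giving $d(H_1(x_1),H_3(x_3)) \leq d(H_1(x_1),H_2(x_2)) + d(H_2(x_2),H_3(x_3)) \leq \varepsilon_{12}+\varepsilon_{23}$. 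The substantive step is (iii$''$): given $(x_1,x_3)\in\mathcal{R}$ through an intermediate $x_2$ and an input $u_1\in U_1(x_1)$, I would first propagate the input forward — (iii$''$) for $\mathcal{R}_{12}$ forces $u_1\in U_2(x_2)$, and then for $\mathcal{R}_{23}$ forces $u_1\in U_3(x_3)$ — and then pull successors backward: for any $x_3 \rTo^{u_1}_3 x_3'$, apply (iii$''$) of $\mathcal{R}_{23}$ to obtain $x_2 \rTo^{u_1}_2 x_2'$ with $(x_2',x_3')\in\mathcal{R}_{23}$, then apply (iii$''$) of $\mathcal{R}_{12}$ to this transition to obtain $x_1 \rTo^{u_1}_1 x_1'$ with $(x_1',x_2')\in\mathcal{R}_{12}$, whence $(x_1',x_3')\in\mathcal{R}$.

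For statement (iii), recall from Definition \ref{composition} that the states of $S_1 \times^{\mathcal{R}}_{\varepsilon} S_2$ are the pairs $(x_1,x_2)$ with $(x_2,x_1)\in\mathcal{R}$, with a transition $(x_1,x_2)\rTo^u(x_1',x_2')$ precisely when $x_1 \rTo_1^u x_1'$ and $x_2 \rTo_2^u x_2'$. I would take as witnessing relation the projection $\mathcal{Q} = \{((x_1,x_2),x_2) : (x_2,x_1)\in\mathcal{R}\}$ onto the second coordinate and check plain $\varepsilon$-simulation (condition (iii)). Condition (i) holds because initial states of the composition project to initial states of $S_2$; condition (ii) holds because $d(H(x_1,x_2),H_2(x_2)) = d(H_1(x_1),H_2(x_2)) = d(H_2(x_2),H_1(x_1)) \leq \varepsilon$, the last inequality being condition (ii) of the strong $A\varepsilon A$ simulation $\mathcal{R}$ from $S_2$ to $S_1$ together with symmetry of $d$; condition (iii) is trivial, since any composed transition $(x_1,x_2)\rTo^u(x_1',x_2')$ already carries the matching transition $x_2 \rTo_2^u x_2'$ by construction.

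I expect the main obstacle to be the bookkeeping of the quantifier alternation in the strong alternating case of (ii): one must check that the single control label $u_1$ survives unchanged through both relations, first as the availability claims $u_1\in U_2(x_2)$ and $u_1\in U_3(x_3)$ and then as the common label of the matched transitions, and that the existential successors are pulled back in the correct order, from $S_3$ through $S_2$ to $S_1$. No ideas beyond Definitions \ref{ASR} and \ref{composition} and the triangle inequality are needed.
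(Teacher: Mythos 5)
Your proof is correct. Note that the paper itself gives no proof of this lemma --- it is stated in the Appendix with a citation to \cite{paulo} --- but your argument is the standard one that the cited reference uses (and adapts here to the strong alternating variant): reuse the same relation for (i), take the composite relation $\mathcal{R}_{12}\circ\mathcal{R}_{23}$ with the triangle inequality for (ii), and take the projection of the composed system onto its second (controller) component for (iii); your handling of the label-preservation and the backward pull of successors through $\mathcal{R}_{23}$ then $\mathcal{R}_{12}$ in the strong alternating case is exactly the point that needs care, and you got it right.
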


\end{document}